\newcommand*{\medcap}{\mathbin{\scalebox{0.9}{\ensuremath{\bigcap}}}}
\newcommand{\QcirP}{\operatorname{Q\mkern-2mu\odot\mkern-2mu P}}
\newcommand{\subQcirP}{\operatorname{Q \odot P}}
\newcommand{\TT}{\mathbb{T}}
\renewcommand{\P}{\mathbb{P}}
\newcommand{\ccF}{{\mathscr F}}
\newcommand{\ccH}{{\mathscr H}}
\newcommand{\ccM}{{\mathscr M}}
\newcommand{\cZ}{{\mathcal Z}}
\DeclareMathOperator{\Var}{Var}
\renewcommand{\P}{\mathbb{P}}
\newcommand{\Ind}{{\mathds 1}}
\newcommand{\ind}[1]{\Ind_{\{#1\}}}
\newcommand{\RR}{\mathbb{R}}
\newcommand{\FF}{\mathbb{F}}
\newcommand{\bbF}{\mathbb{F}}
\newcommand{\bbH}{\mathbb{H}}
\newcommand{\bbG}{\mathbb{G}}
\renewcommand{\P}{P} 
\newenvironment{enumeratei}
  {\begin{enumerate} }
  {\end{enumerate}}
\theoremstyle{theorem}
\newtheorem{theorem}{Theorem}[section]
\newtheorem{corollary}[theorem]{Corollary}      
\newtheorem{proposition}[theorem]{Proposition}  
\theoremstyle{definition}
\newtheorem{example}{Example}[section]
\newtheorem{remark}{Remark}[section]
\newtheorem{assumption}{Assumption}[section]
\definecolor{tscolor}{rgb}{1.0,0.6,0.0}
\newcommand{\SB}{{\rm SB}}
\renewcommand{\mid}{\, \big| \,}
\begin{document}

\title{Insurance products with guarantees in an affine setting}

\author{Raquel M. Gaspar}
\address{ISEG Research, Lisbon School of Economics and Management, Universidade de Lisboa, Portugal}
\email{rmgaspar@iseg.ulisboa.pt}

\author{Thorsten Schmidt}
\address{Department of Mathematical Stochastics, University of Freiburg, Germany}
\email{thorsten.schmidt@stochastik.uni-freiburg.de}

\thanks{The work of R.M.~Gaspar was partially supported by FCT, I.P., the Portuguese national funding agency for science, research and technology, under the Project UID06522. We also thank the MAPFRE foundation for support through the Research Grant Ignacio H. de Larramendi. The work of T. Schmidt was partially supported by a grant from Deutsche Forschungsgemeinschaft under the project SCHM 2160/15-1. Support of the FDMAI is gratefully acknowledged. We want to thank Wilfried Donatien Kuissi Kamdem and Felix Tambe Ndonfack for a careful reading of the manuscript and David Criens for his comments on an earlier version of the manuscript.}
\maketitle

\begin{abstract}
For the attractivity of medium- and long-term insurance products
it is necessary to participate on the profitability of stock markets. 
To eliminate downside-risk, guarantees should be included  which naturally gives rise to the problem of valuing contracts in a unified insurance–finance framework.

We study a general setup that allows for joint modelling of financial markets, mortality, and policyholder behaviour. Within this framework, we propose a general affine approach and obtain explicit valuation formulas for variable annuities and related contracts that remain computationally tractable due to the affine structure. The model permits flexible dependence between mortality and equity dynamics, as highlighted by the empirical evidence from the COVID-19 pandemic. Moreover, surrender intensities are modelled as functions of the driving affine process, thereby introducing market dependence into lapse behaviour. The resulting framework combines analytical tractability with sufficient flexibility to capture key features of long-term insurance products. 
\end{abstract}

\vspace{2mm}

\section{Introduction}

The design of retirement products, and more generally medium- and long-term insurance products, remains one of the central challenges  in modern insurance mathematics since longevity, low interest rates and uncertainty about asset returns must be handled in an cost-efficient and risk-sensitive manner. Variable annuities and related products form an important class in this context, combining participation in equity markets with embedded guarantees to reduce risk. Their valuation and risk-management poses significant challenges, in particular over a long time horizon. 

We start in a general setup in discrete time and show how to obtain valuation formulas that guarantee absence of insurance-finance arbitrage in the sense of \cite{artzner2024}. To obtain tractable results we use enlargement of filtration techniques and derive quite general results for one, two or possibly more stopping times related to the contract (like mortality and surrender). 

Towards a flexible and tractable setting, we propose to use affine models. Affine processes  constitute a flexible and highly tractable class which makes them well suited for the problem at hand. 
In contrast to many existing approaches we allow for a quite general dependence between insurance  and finance products, which leads to technical difficulties. In particular, we address the valuation problem when both mortality and surrender are possibly correlated to the financial markets. We propose to use enlargement-of-filtration techniques which are, to the best of our knowledge,  applied in this form for the first time to the valuation of insurance products. The considered framework  also allows to incorporate more than two stopping times. 

In the literature,  different approaches have been proposed for the valuation of variable annuities and similar products (see e.g., \cite{dhaene2017fair}, \cite{Dhaenekukushetal-2013}, \cite{pelsser_stadje_2014} and  \cite{semyon_eugene_wuerthich_2008} and references therein). More recently, 2-step and 3-step approaches have been proposed as for example in \cite{deelstra2020valuation, barigou2023actuarial} and in \cite{linders23}. Extensions of the insurance-finance framework we consider here to uncertainty can be found in  \cite{oberpriller2024robust} and to the benchmark approach in \cite{PlatenSchmidtSchmutz25}. 

On the other side, tontines and related products have been proposed as competitive pension products and we refer to \cite{milevsky2015optimal}, \cite{winter2022modern}, \cite{chen2020optimal},  \cite{chen2019tonuity} and further literature therein.

The paper is organised as follows: in Section \ref{sec:IFA} we introduce insurance-finance markets and provide with Theorem \ref{thm-FTIFA} an fundamental theorem which will be used later on for valuation without insurance-finance arbitrage, a so-called IFA-free valuation. In Section \ref{sec:IFA VA} we introduce the framework for insurance products with guarantees, which include variable annuities as a special case. Proposition \ref{prop:VA} as a key result provides IFA-free valuation formulas for all building blocks of a variable annuity with guarantee, surrender and death benefit. For more explicit and tractable formulas we will use enlargements of filtrations, which are introduced in section \ref{sec: progressive}. In particular, results for multiple stopping times are given and the setting for two stopping times (mortality and surrender) is developed in more detail. Section \ref{sec:affine} introduces a general affine process under the statistical and the risk-neutral measure and develops the key formulas needed for the valuation. Section \ref{sec:valuation} provides the valuation results for the building blocks of a variable annuity in an affine framework and Section \ref{sec:conclusion} concludes.

\section{Insurance-finance markets and arbitrage-free valuation}
\label{sec:IFA}
We follow \cite{artzner2024} and consider financial markets and insurance contracts with minimal assumptions on their dependence. To this end, consider a probability space $(\Omega, \ccH,P)$ and a discrete, finite time interval $\TT=\{0, 1, ...,T\}$. Information is divided into publicly available information (like stock prices, life tables, etc.), captured by the filtration 
$\bbF=(\ccF_t)_{t\in\TT}$ and internal information only available to the considered insurance company. This are typically the survival times of the insured clients, together with further information on the clients, like for example health states, encoded in the filtration  $\bbH=(\ccH_t)_{t\in\TT}$. We assume that $\bbH$ already encompasses public information , i.e.
\begin{align*}
 \ccF_t \subseteq\ccH_t \quad \text{for } t=0,\dots,T.
\end{align*}

\begin{remark}[Choice of the filtrations]
The main role of $\bbF$ is to capture information which does \emph{not} introduce arbitrage on the financial market or changes the market pricing, see Theorem \ref{thm-FTIFA} for the precise statement. Hence, one could also include insurance information here, as long as no financial arbitrage is created, which would both simplify pricing and hedging. 
In contrast, the role of $\bbH$ is to be general enough to capture insurance information which possibly could lead to an arbitrage on the financial market. Using this information is forbidden, however, by the law of insider trading.  
In \cite{artzner2024}, an additional filtration $\bbG$ was introduced to precisely capture this effect, while here - for simplicity - we stay with two filtrations only.
\end{remark}

\subsection{The financial market}\label{ssec-finmar}
The financial market consists of the risk-free account and $d$ tradeable securities.  Discounted price processes of the tradeable securities are given by the  $\FF$-adapted  process $ S=(  S^1,\dots,  S^d)$.

A \emph{trading strategy} on the financial market is  a $d$-dimensional, $\bbF$-adapted process $\xi=(\xi_t)_{0 \le t \le  T-1}$ with $\xi_t=(\xi^1_t,\dots,\xi^d_t)$. For a trading strategy, its \emph{(discounted) value}
is given by (see for example \cite{FoellmerSchied}, Proposition 5.7),
 $$ V^F(\xi) :=  (\xi \cdot S)_T =
  \sum_{t=0}^{T-1} \sum_{j=1}^d \xi^j_t \; \Delta S^j_{t}, $$
 with $\Delta S_{t} = S_{t+1} - S_{t}$.
Here we choose to add the superscript $F$ to emphasize that $V$ is the value of trading on the financial market.

 It is well-known that the financial market is arbitrage free if there exists an equivalent martingale measure. For the following we will therefore assume that  the set $\ccM_{e,b}(S,\FF)$ of equivalent martingale measures with bounded densities is not empty.
\subsection{The insurance market}\label{ssec-stacon}

Besides trading on the financial market, the insurance company can build a portfolio of insurance contracts. We assume that the insurance company can contract with possibly infinitely many \emph{insurance seekers}.

We concentrate on one type of insurance contract.
Such a contract, initiated at time $t$, offers \emph{benefits} at maturity $T$ which can be identified with a $\ccH_T$-measurable non-negative random variable $X_{t,T}$ (already discounted). In exchange for this benefits, the insured pays a non-negative premium $p_t$ at $t$, satisfying 
\begin{align}\label{cond-pt}
  p_t \in L^1_+(\ccH_t)=L^1_+(\Omega,\ccH_t,P).
\end{align}

While  financial contracts are standardised, insurance contracts are individual: they are linked to personal quantities as for example the life time of the insured, such that each contract is different.

We consider a homogeneous cohort (say of the same gender, age and health state) which buy the same type of insurance contract and therefore will pay the same premium for the contract. 
The associated $\ccH_T$-measurable benefits of the individual contracts, however, are different and are denoted by $X^1_{t,T},X^2_{t,T},\dots\; $. 
With these, insurance trading at $t$ is described by  an  \emph{insurance allocation} $\psi=(\psi^{i}_t)_{t \ge 0, i \ge 1}$: for each $t =0,\dots,T-1$, this  is a  $\ccH_t$-measurable, non-negative random sequence, where $\psi^{i}_t$ denotes the size of the contract with the $i^{th}$ insurance seeker.
The (discounted) value of the allocation is hence given by 
\begin{align}
        \label{VItpsi}
        V^I(\psi):= \sum_{t=0}^{T-1} \sum_{ i \ge 1} \psi^{i}_t \big( p_t - X^i_{t,T} \big),
    \end{align}
    where $I$ is used as superscript to describe the value of building an insurance allocation. 
    
 To obtain realistic strategies, we assume that in an allocation the insurance is allowed to trade only with finitely many contracts, whereafter we take limits. More precisely,  an \emph{insurance portfolio strategy} is a  sequence $\psi:=(\psi^{n})_{n \ge 1}$ of allocations. Each allocation $\psi^n=(\psi_t^{n,i})_{t \ge 0, i \ge 1}$ has only finitely many non-negative entries. In addition, we impose the following \emph{admissibility condition} for a portfolio strategy:
\emph{Convergence of the insurance volume}: there exist random variables $\gamma_t \ge  0 $, $0\le t<T$ so that
\begin{align} \label{conv-psi}
            \parallel\psi^n_t\parallel:= \sum_{i \ge 1}~\psi^{n,i}_t  \rightarrow \gamma_t \quad \text{a.s. for all } t<T.
\end{align}
The precise measurability of $\gamma_t$ is explained in the next subsection.
Later, we are interested in the particular case of a bounded portfolio strategy:
in addition to \eqref{conv-psi} we say that the  portfolio strategy $\psi$  is \emph{bounded} if there exists $c>0$ so that
 \begin{align} \label{bdd-psi}
            \parallel \psi^n_t\parallel \le c.
        \end{align}
for all $n\ge 1$ and $0\le t < T$.

It turns out that the $\sigma$-algebra
\begin{align}\label{de-GtT}
 \ccH_{t,T} := \ccH_t \vee \ccF_T
\end{align}
containing the insurance information up to date $t$ and $\ccF_T$, plays a distinctive role.
We make the following assumptions: \begin{assumption}\label{AssX1}
    For all $t \in \TT$, the standard contract $X_{t,T}\in L^2(\Omega,\ccH_T,P)$ and the individual ones $X^i_{t,T}$ satisfy
    \begin{enumerate}
        \item $  X^1_{t,T}, X_{t,T}^2,\dots \in L^2(\Omega,\ccH_T,P)$ are  $\ccH_{t,T}$-conditionally independent,
        \item $ \ E[ X^i_{t,T}|\ccH_{t,T} ] = E[X_{t,T} | \ccH_{t,T}], \ i=1,2,\dots$, and
        \item $\Var(X^i_{t,T}|\ccH_{t,T}) = \Var(X_{t,T}|\ccH_{t,T}) < \infty,  \ i = 1,2,\dots.$
    \end{enumerate}
\end{assumption}

Intuitively, the assumption on \emph{conditional} independence is very general since it includes all available financial information (until the final time $T$). This covers in particular the important case of a pandemic: when stock prices fall and mortality increases - which can be modelled through a hidden factor or via an intensity directly depending on the stock market, as for example in \cite{ballotta2019variable}.

An \emph{insurance-finance strategy} is now the pair $(\psi,\xi)$ which achieves the (discounted) \emph{insurance-finance value}
\begin{align}\label{IFR}
\lim_{n \to \infty}V^I(\psi^n)+ V^F(\xi).
\end{align}

\subsection{Insurance-finance arbitrage}\label{ssec-infiarb}

The \emph{insurance-finance market} is hence given by the triplet $(X,p,S)$.
An admissible insurance portfolio strategy $(\psi^n)_{n \ge 1}$, and an insurer's trading strategy $\xi$ form an \emph{insurance-finance arbitrage} (IFA), if
\begin{align}\label{con-IFarb}
   \lim_{n \to \infty} V^I(\psi^n)+ V^F(\xi) \in L_0^+ \backslash \{0\}.
    \end{align}

If there is no general IFA on the insurance-finance market, we say  no general insurance-finance arbitrage (NIFA${}^0$) holds. If there is   no bounded IFA, we say  NIFA${}^\infty$ holds.

In the following theorem, contained in  \cite{artzner2024}, 
absence of insurance arbitrage is characterized, however in a slightly simplified version.
For this result we  also rely on  a measure $P^*$ which is equivalent to $P$ (and later on take the role of an equivalent martingale measure, i.e.\ a risk-neutral measure). 
We  need a similar assumption as Assumption \ref{AssX1} for $P^*$, where we additionally  assume that the conditional expectation of $X_{t,T}$ under $P^*$ coincides with those under $P$: 

\begin{assumption}\label{AssX1*}
Consider $P^* \sim P$ and assume that 
   for all $t \in \TT$, 
    \begin{enumerate}
        \item $  X^1_{t,T}, X_{t,T}^2,\dots \in L^2(\Omega,\ccH_T,P^*)$ are  $\ccH_{t,T}$-conditionally independent under $P^*$,
        \item $ \ E_{P^*}[ X^i_{t,T}|\ccH_{t,T} ] = E_{P^*}[X^1_{t,T} | \ccH_{t,T}], \ i=2,3,\dots$, and
        \item $\Var_{P^*}(X^i_{t,T}|\ccH_{t,T}) = \Var_{P^*}(X^1_{t,T}|\ccH_{t,T})< \infty,  \ i = 2,3,\dots.$
    \end{enumerate}
\end{assumption}

\renewcommand{\theenumi}{\roman{enumi}}
\renewcommand{\labelenumii}{(\theenumi.\theenumii)}
\makeatletter\renewcommand{\p@enumii}{\theenumi.}
\makeatother

We recall that $\ccM_{e,b}(S,\bbF)$ is the set of equivalent martingale measures with bounded density on the financial market given by stock prices $S$ and filtration $\bbF$. Then, the following holds.

\begin{theorem}\label{thm-FTIFA}
On the insurance-finance market $(X,p,S)$ with Assumption \ref{AssX1},  the sequence of implications
 \eqref{FTIFA1}$\Rightarrow$
\eqref{FTIFA3} \; holds for the following assertions:
  \begin{enumerate}
    \item\label{FTIFA1}\qquad $NIFA^0$ holds,
    \item\label{FTIFA3}\qquad There exists  $P^*\sim P$ on $(\Omega,\ccH_{T-1,T})$ so that
\begin{enumerate}
  \item\label{coFT31}
  $\; P^*|_{ \ccF_T} \in \ccM_{e,b}(S,\bbF)\;$ and
  \medskip
  \item\label{coFT32}$\; E_{P^*}\big[p_t-X_{t,T}\big|\ccF_t\big] \le 0$ for $t=0,\dots,T-1$.
\end{enumerate}
  \end{enumerate}
  Moreover, if \eqref{FTIFA3} holds and $P^*$ satisfies Assumption \ref{AssX1*}, then 
 \begin{enumerate}
 	\setcounter{enumi}{2}
 	\item  \label{FTIFA4}\qquad $NIFA^\infty$ holds.\end{enumerate}
\end{theorem}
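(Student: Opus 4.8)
The plan is to handle the two assertions by different means. The implication $\eqref{FTIFA1}\Rightarrow\eqref{FTIFA3}$ is of fundamental-theorem-of-asset-pricing type and will be obtained by a separation argument, whereas the concluding implication (existence of $P^*$ together with Assumption \ref{AssX1*} $\Rightarrow NIFA^\infty$) is a direct verification. The structural device common to both is a conditional law of large numbers: under Assumption \ref{AssX1} the individual benefits $X^1_{t,T},X^2_{t,T},\dots$ are $\ccH_{t,T}$-conditionally i.i.d.\ with common conditional mean $\mu_t:=E[X_{t,T}\mid\ccH_{t,T}]$ and equal conditional variance, so that the conditional $L^2$-law of large numbers makes the idiosyncratic risk average out and
\[
 V^I(\psi^n)\;\longrightarrow\;Y:=\sum_{t=0}^{T-1}\gamma_t\,(p_t-\mu_t)
\]
with $\gamma_t\ge 0$ the limiting volume from \eqref{conv-psi}. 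This collapses the infinite-dimensional insurance side to the tractable cone $\{\sum_t\gamma_t(p_t-\mu_t)\}$ and is the reason two filtrations suffice.

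For $\eqref{FTIFA1}\Rightarrow\eqref{FTIFA3}$ I would introduce the cone of attainable values $\mathcal A:=\{Y+V^F(\xi)\}-L^0_+$, observe that $NIFA^0$ is precisely $\mathcal A\cap L^0_+=\{0\}$, and apply a Kreps--Yan/Hahn--Banach separation, anchored at an element of the nonempty set $\ccM_{e,b}(S,\bbF)$ so as to obtain an \emph{equivalent} measure with \emph{bounded} density, producing $P^*\sim P$ with $E_{P^*}[k]\le 0$ for all $k\in\mathcal A$. Since the financial gains $V^F(\xi)$ form a linear space, testing against $\pm V^F(\xi)$ gives $E_{P^*}[(\xi\cdot S)_T]=0$ for every strategy, i.e.\ $S$ is an $\bbF$-martingale under $P^*$, which is \eqref{coFT31}; testing against $\gamma_t(p_t-\mu_t)$ over non-negative $\gamma_t$ gives $E_{P^*}[\,p_t-\mu_t\mid\ccF_t\,]\le 0$, which by the tower rule (using $\ccF_t\subseteq\ccH_{t,T}$ and the definition of $\mu_t$) is \eqref{coFT32}.

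For the concluding implication I would argue by contradiction. Let a \emph{bounded} insurance-finance strategy satisfy $W:=\lim_n V^I(\psi^n)+V^F(\xi)\in L^0_+\setminus\{0\}$. By \eqref{bdd-psi} we have $\gamma_t\le c$, and under Assumption \ref{AssX1*} the conditional law of large numbers now holds under $P^*$, so $Y\in L^1(P^*)$ and $V^F(\xi)=W-Y\ge-Y$ is bounded below by an integrable random variable; hence $(\xi\cdot S)$ is a $P^*$-supermartingale and $E_{P^*}[V^F(\xi)]\le 0$. On the insurance side,
\[
 E_{P^*}[Y]=\sum_{t=0}^{T-1}E_{P^*}\big[\gamma_t(p_t-X_{t,T})\big]=\sum_{t=0}^{T-1}E_{P^*}\big[\gamma_t\,E_{P^*}[\,p_t-X_{t,T}\mid\ccF_t\,]\big]\le 0,
\]
the first equality by $\ccH_{t,T}$-conditioning, the second by the $\ccF_t$-measurability of $\gamma_t$, and the inequality by \eqref{coFT32}. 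Therefore $E_{P^*}[W]\le 0$, while $W\ge 0$ and $P^*\sim P$ force $W=0$, contradicting $W\in L^0_+\setminus\{0\}$; thus $NIFA^\infty$ holds.

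The main obstacle lies in the existence direction, namely the closedness of $\mathcal A$ in the dual pairing used for separation: the simultaneous limit in $n$ and the infinite family of contracts must be controlled so that Kreps--Yan yields an \emph{equivalent} measure with \emph{bounded} density rather than a merely absolutely continuous one, and this is exactly where the conditional law of large numbers and the non-emptiness of $\ccM_{e,b}(S,\bbF)$ enter. In the verification direction the subtle points are purely of bookkeeping: the integrability of the financial gains, which I resolve through the bounded-below supermartingale lemma (cf.\ \cite{FoellmerSchied}), and the matching of the measurability of the limiting volume $\gamma_t$ with the filtration $\bbF$ appearing in \eqref{coFT32}, which the simplified two-filtration setup is arranged to provide.
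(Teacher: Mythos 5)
The first thing to note is that the paper contains no proof of Theorem \ref{thm-FTIFA} at all: the result is imported, in simplified form, from \cite{artzner2024}, so your proposal can only be measured against the strategy of that reference. Your overall architecture --- collapse the insurance side by a conditional law of large numbers, obtain $P^*$ by separation, and verify $NIFA^\infty$ directly --- is indeed the architecture of the cited work, but two of your steps contain genuine gaps.

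First, in the direction \eqref{FTIFA1}$\Rightarrow$\eqref{FTIFA3}, the passage from $E_{P^*}[p_t-\mu_t\mid\ccF_t]\le 0$ to \eqref{coFT32} ``by the tower rule'' is invalid. Your $\mu_t=E_P[X_{t,T}\mid\ccH_{t,T}]$ is a conditional expectation \emph{under $P$}, while the outer expectation is under $P^*$; the tower property is only available when both expectations are taken under the same measure. The identification $E_{P^*}[\mu_t\mid\ccF_t]=E_{P^*}[X_{t,T}\mid\ccF_t]$ holds, for instance, when $E_{P^*}[\,\cdot\mid\ccH_{t,T}]=E_P[\,\cdot\mid\ccH_{t,T}]$ for every $t$, which is guaranteed if the density $dP^*/dP$ is $\ccH_{t,T}$-measurable for \emph{all} $t$ --- in particular if it is $\ccF_T$-measurable, which is exactly the $\QcirP$-construction \eqref{def:QcP}. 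A Kreps--Yan separating measure on $(\Omega,\ccH_{T-1,T})$ carries no such measurability: since $\ccH_{t,T}\subsetneq\ccH_{T-1,T}$ for $t<T-1$, the separation only yields a statement about $\mu_t$, not about $X_{t,T}$, and \eqref{coFT32} does not follow. Repairing this essentially forces the candidate $P^*$ to be of the form $Q\odot P$ with $Q$ obtained from the financial market, which is why the QP-structure is central in the reference.

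Second, in the verification of \eqref{FTIFA4}, the identity $V^F(\xi)=W-Y$ presumes $\lim_n V^I(\psi^n)=Y$, i.e.\ that every bounded admissible strategy diversifies. It does not: conditions \eqref{conv-psi} and \eqref{bdd-psi} constrain only the aggregate volume $\parallel\psi^n_t\parallel$, not the individual weights. The strategy $\psi^{n,1}_t=\gamma_t$, $\psi^{n,i}_t=0$ for $i\ge 2$, is bounded admissible and its value converges to $\sum_t\gamma_t(p_t-X^1_{t,T})\neq Y$; no law of large numbers applies to it, so your insurance-side estimate does not cover all strategies admitted in the definition of $NIFA^\infty$. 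The correct argument decomposes $V^I(\psi^n)=\sum_t\parallel\psi^n_t\parallel(p_t-\mu^*_t)+M^n$ with $\mu^*_t=E_{P^*}[X^1_{t,T}\mid\ccH_{t,T}]$ and $M^n=\sum_t\sum_i\psi^{n,i}_t(\mu^*_t-X^i_{t,T})$: the idiosyncratic part $M^n$ has zero $P^*$-mean and is bounded in $L^2(P^*)$, because conditional independence and equal conditional variances give
\begin{align*}
\Var_{P^*}\Big(\sum_i\psi^{n,i}_t(\mu^*_t-X^i_{t,T})\,\Big|\,\ccH_{t,T}\Big)=\sum_i(\psi^{n,i}_t)^2\,\Var_{P^*}(X^1_{t,T}\mid\ccH_{t,T})\le c^2\,\Var_{P^*}(X^1_{t,T}\mid\ccH_{t,T}),
\end{align*}
hence $M^n$ is uniformly integrable and its a.s.\ limit has zero $P^*$-expectation even though it need not vanish. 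It is telling that your argument never invokes the equal-conditional-variance hypothesis of Assumption \ref{AssX1*}: that hypothesis exists precisely to control this non-vanishing idiosyncratic limit. Beyond these two points there are smaller loose ends --- integrability of $p_t$ and of $V^F(\xi)$ under $P^*$ (the density is only required to be bounded on $\ccF_T$, not on $\ccH_{T-1,T}$), and the measurability of $\gamma_t$ (in general only $\ccH_t$-measurable) versus the $\ccF_t$-conditioning in \eqref{coFT32} --- which are of the bookkeeping type you acknowledge, but the two gaps above are structural.
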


For us, the most important implication is  \eqref{FTIFA4}: if we find such a $P^*$ then there are no insurance-finance arbitrages. This can be achieved as follows: assume that  Assumption 2.1 holds and denote by $L$ the Radon-Nikodym derivative of $Q$ with respect to $\P|_{\ccF_T}$, such that $dQ=L\; dP$. Then,  define $P^*$ on $(\Omega, \ccH_T)$ by
\begin{align} \label{def:QcP}
            dP^* = L\, dP.
\end{align}
In this case, $P^*$ is the so-called \emph{QP-measure} which we denote by $\QcirP$. By Proposition 4.1 in \cite{artzner2024}, this $P^*$ coincides with $Q$ on $\ccF_T$, hence it is \emph{market-consistent}. On the other side,  for $X \in L^1$ or bounded from below,
        \begin{align}\label{PQ-rule:filtation}
          E_{\subQcirP}[X \,|\, \ccF_t]  & = E_Q \big[ E_P [ X \,|\,\ccF_T ]\, |\,\ccF_t \big].
    \end{align}
 Intuitively, this rule describes how to evaluate insurance contracts in an IFA-free way: by projection onto the publicly available information (under $P$) and afterwards by applying the risk-neutral pricing rule (under some $Q$ calibrated to available market data).  For further details we refer to \cite{artzner2024}.
    
In the following, we will show how to use this result on highly flexible affine models in insurance-finance markets.

\bigskip

\section{IFA-free valuation of  insurance products with guarantees}
\label{sec:IFA VA}

In this setting, we provide a general framework for insurance products with guarantees, where we include typical contract specifications of variable annuities. The setting is modular and  general, such that also other types of insurance products can be valued using this approach. 

A \emph{variable annuity} (VA) is an insurance contract which gives the holder a variety of benefits in exchange for a sequence of payments. The contract  has a maturity $T>0$. 
We start by explaining typical contract specifications of a variable annuity. 

While up to now we considered already discounted quantities, for the contract specifications we prefer to state the quantities  in undiscounted  terms, and denote by 
$$ \beta(t,T)$$ 
the discounting factor at time $t$ of the secure payment of 1 unit of money at time $T$ (which is often captured through the bank account $S^0$ with $\beta(t,T) = \nicefrac{S^0_t}{S^0_T}$). Whenever starting from time $0$, we simplify the notation by setting $\beta(T):=\beta(0,T)$.

\subsection{Contract details}\label{sec:contract details}
The payments $\pi_1,\dots,\pi_n$ are due at  a discrete time grid $T_0=0<T_1<\cdots<T_n<T_{n+1}=T$, where the annuity requires the payment $\pi_i$ at time $T_i$. These investments are immediately invested in a fund and we denote the value process of the fund by $F^\pi=(F^\pi_t)$. For simplicity, we assume that $\beta$ is deterministic, since in most practical applications we have in mind the impact of the other quantities (mortality, stock price, growth rate) will be significantly stronger.

\begin{remark}[On the underlying fund]
The main observation due in this respect is that the premium is not paid at a single time up-front, but by a series of payments. This means, if investments are in the stock $S^1=S$, say, that the value of the underlying fund at time $t$ is, given no surrender or death, 
\begin{align}\label{eq:F pi} 
    F_t^\pi = \Big( \sum_{i=1}^n \ind{T_i \le t} \frac{\pi_i}{S_{T_i}} \Big) \cdot S_t.
\end{align}
Consequences of these types of payments where already studied in \cite{bernard2017impact} and in \cite{hainaut2025participating}, but not in a general affine framework.    
\end{remark}

\bigskip

In the specification which we consider here, the VA offers three features:
\begin{enumerate}[(i)]
\item a guaranteed minimum accumulation benefit (GMAB),
\item a surrender benefit (SB), and 
\item a death benefit (DB).
\end{enumerate}
The precise description of these benefits can be done as follows: 
first, the GMAB equals the best of two investment choices, either the value of the underlying fund or a guaranteed amount. 

More specific, let us consider the guarantee of a fixed interest rate $\delta$ over the lifetime of the contract: if we upcount all payments to the maturity $T$, we obtain that the guarantee sums up to 
$$ K^\pi_T = \sum_{i=1}^n e^{\delta(T-T_i)} \pi_i. $$
Analogously, the value of the accumulated payments done until time $t$ computes to  $ K^\pi_t = \sum_{i=1}^n \ind{T_i \le t} e^{\delta(t-T_i)} \pi_i$.

At maturity the policyholder receives either the value of the fund or the value of the payments, invested with the guaranteed interest rate $\delta$, which equals \begin{align} \label{eq:GMAB}
    \max (  F_T^\pi,  K_T^\pi ). 
\end{align}
This, however, can only be claimed if the policyholder is still alive at time $T$ \emph{and} she did not surrender before.
  
Second, in case of early \emph{surrender} 
her right of refund is restricted to the current fund account value reduced by a \emph{compulsory surrender penalty}. 
For simplicity, we assume surrender is possible on the same grid as the payments are done. In case of surrender at time $T_i$, the surrender benefit is given by 
\begin{align} \label{eq:SB}
   \SB(T_i):= F^\pi_{T_i} \cdot p_{\text{SB}}(T_i), 
\end{align}
where the penalty $p_{\text{SB}}:[0,T] \to (0,1]$ is a deterministic function of time. Choosing $p$ small at early times and equal or close to one at later times  allows to overcome the problem that initial expenses of the contract may not be covered in the case of early surrender.

Third, in case of death before $T$, the \emph{death benefit} provides a payoff, but only  if the early surrender option has not been exercised yet. It includes the same guarantee (for simplicity) as in the GMAB, but this time paid out at the next $T_i$ after death. Therefore, the payoff of the death benefit in case of death at time $t$ such that $T_{i-1} < t \leq T_i$ is settled at time $T_i$ and is given by 
  \begin{align} \label{eq:DB}
     {\rm DB}(T_i) :=  \max(F_{T_i}^\pi, K_{T_i}^\pi).
  \end{align}

  \subsection{Mortality and surrender}
  Regarding mortality and surrender we assume that the given insurance clients are a homogeneous cohort in the sense that the probabilistic description of the individual mortality time $\tau^{m,i}$ can be captured by the probabilistic modelling of the mortality $\tau^m$ of a typical individual, which we detail in the following. This is done via an intensity-based approach and hence depends on the individual survival $\tau^{m,i}$, but otherwise on a common intensity. Furthermore, as typical in life-insurance, we introduce the current age $x$ as additional covariable.
  
In this regard,  assume that the remaining lifetime $\tau^m(x)$ of a insured person aged $x$ years is a $\bbH$-stopping time with %
intensity $\lambda^m(x)$. i.e. 
   \begin{align} \label{def:taum}
      P(\tau^m(x) > T \,|\,\ccH_t, \,\tau^m(x)>t) = \ind{\tau^m(x) > t}E\Big[e^{-\int_t^T \lambda^m_s(x) ds}|\ccH_t\Big].
  \end{align}

  \begin{remark}[Intensity-based setting]
  In particular, for each individual $i$ aged $x$ years we have that
   \begin{align} 
      P(\tau^{m,i}(x) > T \,|\, \ccH_t \, , \tau^m(x)>t) = \ind{\tau^{m,i}x) > t}E\Big[e^{-\int_t^T \lambda^m_s(x) ds}|\ccH_t\Big].
  \end{align}
  This assumption precisely reflects what we call a \emph{homogeneous} cohort of insured persons: while individual surrender or mortality occurs of course individually, the estimated probability distributions are the same for all individuals in the considered cohort.	\hfill $\diamond$
  \end{remark}

  In case of early surrender at time $T_i$, the policyholder can reinvest the surrender benefit ${\rm SB}(T_i)$ for the remaining time until the maturity $T$ leading to the benefit
  \begin{align}\label{eq:SB payoff}
 	   \SB(T_i) \beta(T_i,T)^{-1}
  \end{align}
  at maturity $T$. 
  
By $\tau^s$ we denote the time when the policy holder decides to exercise the surrender option. 
Similar to mortality, we assume that $\tau^s$ satisfied 
\begin{align} \label{def:taus}
      P(\tau^s > T \,|\,\ccH_t,\, \tau^s>t) = \ind{\tau^s > t}E\Big[e^{-\int_t^T \lambda^s_u ds}|\ccH_t\Big].
  \end{align}
  
  For further literature in this regard, and a detailed treatment of random times with intensities and associated enlargements of filtrations, see \cite{AksamitJeanblanc}. The approach chosen here falls in the class of so-called doubly-stochastic modeling approaches for the  random times and the related immersion hypothesis holds.
  Note, that even if the policy holder decides in continuous time, the exercise of the surrender option only takes place on a discrete grid: more precisely, if $T_{i-1}<\tau^s \le T_i$ then surrender takes place at $T_i$ leading to the payoff specified in Equation\eqref{eq:SB payoff}.

  \subsection{The valuation rule}
  The value of the variable annuity  decomposes in GMAB, SB and DB (all of course depending on the maturity $T$) in a linear way which allows to apply (for example) the QP-rule directly for IFA-free pricing. We denote the expectation under $P^*$ by $E^*=E_{P^*}$. We call the premium of an insurance-finance contract \emph{fair} if it does not introduce any IFA. In this sense, prices obtained by the QP-rule or by taking expectations under a measure $P^*$ as in Theorem \ref{thm-FTIFA} induce fair premia.

  \begin{proposition}\label{prop:VA}
  Assume that $P^*$ is as in Theorem \ref{thm-FTIFA}. Then, an IFA-free valuation of the variable annuity is given by 
  	\begin{align}\label{price: VA}
  		{\rm VA}_t = {\Pi}_t + {\rm GMAB}_t + {\rm SB}_t + {\rm DB}_t, \qquad 0 \le t \le T,
  	\end{align}
  	where \begin{enumeratei}
  		\item the \emph{discounted value of the future (and the present) payments} is 
  		\begin{align} \label{eq:Pi t}
{\Pi}_t = - \sum_{i=0}^n \beta(t,T_i) \pi_i \, \ind{T_i \ge t}E^*\Big[ \ind{\tau^m > T_i ,\, \tau^s> T_i} \mid \ccF_t \Big];  		
  		\end{align}
  		\item the value of the \emph{guaranteed minimum accumulation benefit} is 
  		\begin{align}\label{eq:GMAB2} 
{\rm GMAB}_t = \beta(t,T)\,  E^*\Big[ \ind{\tau^m(x) >T} \ind{\tau^s >T} \cdot \max (  F_T^\pi,  K_T^\pi )  \mid \ccF_t\Big];   			
  		\end{align}
  		\item the value of the \emph{death benefit}
  		\begin{align}\label{eq:DB}
 {\rm DB}_t = \sum_{k=1}^{n+1} \beta(t,T_k) \, E^*\bigg[  \ind{\tau^m(x) < \tau^s} \ind{T_{k-1}< \tau^m(x) \le T_k} \max(   F_{T_k}^\pi,  K_{T_k}^\pi )   \big| \ccF_t \bigg];
 \end{align}

  		\item the value of the \emph{surrender benefit} is
  		\begin{align} \label{eq:SB2} {\rm SB}_t=   \sum_{k=1}^n \beta(t,T_k) \, E^*\Big[  \ind{ T_{k-1}<\tau^s \le T_k}  \ind{\tau^s < \tau^m(x)} F_{T_k}^\pi p(T_k)   \mid \ccF_t \Big].\end{align}
  	\end{enumeratei}
  \end{proposition}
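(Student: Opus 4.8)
The plan is to reduce the statement to a single valuation principle and then to linearity. By Theorem~\ref{thm-FTIFA} together with the QP-rule~\eqref{PQ-rule:filtation}, an IFA-free time-$t$ value of any $\ccH_T$-measurable discounted benefit is obtained as the conditional expectation $E^*[\,\cdot\mid\ccF_t]$. Since the total net payoff of the variable annuity is a finite sum of such benefits and premium flows, and since $\beta$ is deterministic so that every discount factor is an $\ccF_t$-measurable constant that pulls out of the expectation, the decomposition~\eqref{price: VA} will follow once the aggregate payoff has been written out and its four groups of terms identified with $\Pi_t$, ${\rm GMAB}_t$, ${\rm DB}_t$ and ${\rm SB}_t$.

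First I would record the premium side. The holder pays $\pi_i$ at $T_i$ exactly while alive and in force, i.e.\ on $\{\tau^m(x)>T_i\}\cap\{\tau^s>T_i\}$; multiplying by $-\beta(t,T_i)$ (an outflow), restricting to present and future payments via $\ind{T_i\ge t}$, and taking $E^*[\,\cdot\mid\ccF_t]$ reproduces $\Pi_t$ in~\eqref{eq:Pi t}. For the benefit side I would split according to which terminating event occurs first. The three events
\[
\{\tau^m(x)>T,\ \tau^s>T\},\qquad \{\tau^m(x)<\tau^s,\ \tau^m(x)\le T\},\qquad \{\tau^s<\tau^m(x),\ \tau^s\le T_n\}
\]
are pairwise disjoint: death ($\tau^m(x)\le T$) excludes survival past $T$, surrender on the grid ($\tau^s\le T_n<T$) excludes $\tau^s>T$, and the strict inequalities separate death-first from surrender-first. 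Here I would invoke that $\tau^m(x)\neq\tau^s$ almost surely, a consequence of the doubly-stochastic, intensity-based construction in~\eqref{def:taum}--\eqref{def:taus} under which the two clocks trigger simultaneously only on a $P$-null set, so that the death/surrender split is well defined up to null sets.

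Next I would attach to each event its payoff and settlement date. On the first event the holder receives the GMAB $\max(F_T^\pi,K_T^\pi)$ at $T$, which after discounting and conditioning is exactly~\eqref{eq:GMAB2}. On death in $(T_{k-1},T_k]$ before surrender the payoff $\max(F_{T_k}^\pi,K_{T_k}^\pi)$ is settled at the next grid point $T_k$, summed over $k=1,\dots,n+1$ so that the terminal interval $(T_n,T]$ is included (recall $T_{n+1}=T$); this yields ${\rm DB}_t$. On surrender in $(T_{k-1},T_k]$ before death the holder collects $F_{T_k}^\pi\,p(T_k)$ at $T_k$, summed only over $k=1,\dots,n$ since surrender is admissible solely on the payment grid $T_1,\dots,T_n$; this yields ${\rm SB}_t$. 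Linearity of $E^*[\,\cdot\mid\ccF_t]$ applied termwise then assembles~\eqref{price: VA}.

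The algebra is routine; the step that needs genuine care is the contract bookkeeping of Section~\ref{sec:contract details}, in particular the asymmetry between the death range $k=1,\dots,n+1$ and the surrender range $k=1,\dots,n$, the settlement of the death benefit at the next grid point rather than at the time of death, and the first-to-occur logic that renders death and surrender mutually exclusive. The only substantive probabilistic input is the almost-sure distinctness $\tau^m(x)\neq\tau^s$; I expect this to be the single point worth stating explicitly, since without it the benefit events could overlap on a set of positive measure and the clean linear decomposition would no longer be immediate.
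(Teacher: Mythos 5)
Your argument is correct and is essentially the paper's own proof: write out the four cash-flow legs with their triggering events and settlement dates, pull out the deterministic discount factors, take $E^*[\,\cdot\mid\ccF_t]$ term by term, and invoke Theorem \ref{thm-FTIFA} for IFA-freeness.

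The one place you go beyond the paper is the claim that $\tau^m(x)\neq\tau^s$ almost surely, which you present as the essential probabilistic input. This claim is both unnecessary and unjustified. Unnecessary: the death-first and surrender-first events carry strict inequalities in opposite directions, so $\{\tau^m(x)<\tau^s\}$ and $\{\tau^s<\tau^m(x)\}$ are disjoint by construction, whatever the joint law; ties only mean that on $\{\tau^m(x)=\tau^s\le T\}$ neither benefit pays, which is a feature of the contract specification rather than an obstacle to the linear decomposition, since the proposition asserts a valuation of the payoffs as written, not that the three scenarios exhaust all terminating paths. Unjustified: \eqref{def:taum} and \eqref{def:taus} pin down only the marginal intensities and say nothing about the joint law, and in the discrete-time doubly stochastic construction of Section \ref{sec: two stopping times} one has $P(\tau^m=\tau^s)>0$ even under conditional independence, because both times take values in the finite grid $\TT$. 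The paper itself treats tie-breaking as a contractual convention: in \eqref{eq: XtT two stopping times} ties are awarded to the surrender leg via $\tau^s\le\tau^m$. So drop that claim; the rest of your proof stands as is.
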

  
  \begin{proof}
  First, recall that the (non-random) payment $\pi_i$ is done at time $T_i$, $i = 0,\dots,n$. Since all quantities are deterministic, we sum the discounted payments after or at $t$, take conditional $P^*$-expectations and \eqref{eq:Pi t} follows. 
  
  Second, recall that the {\rm GMAB} provides a payoff  only  if the policyholder still alive until $T$ (i.e., $\{\tau^m(x) > T\}$) and if there was no surrender until this time  (i.e., $\{\tau^s > T\}$). The payoff $\max(F_{T_k}^\pi,K_{T_k}^\pi)$, see Equation \eqref{eq:GMAB}, is $\ccF_T$-measurable and paid at maturity $T$. Taking conditional $P^*$-expectations leads to the expression in \eqref{eq:GMAB2}. 
  
  Third, the death benefit provides a payoff only  if the surrender option  was not exercised before. Hence, if  $T_{k-1}<\tau^m(x) \le T_k$, the contract pays the amount $\max(   F_{T_k}^\pi,  K_{T_k}^\pi )$ at $T_k$; conditional on no previous surrender of course. Summing again over all possible time $T_1,\dots,T_{n+1}=T$ and taking $P^*$-expectations, Equation \eqref{eq:DB} follows.
  
 Fourth, we consider the surrender option. This option can only be exercised if the insurer is still alive (i.e., $\{\tau^s < \tau^m(x)\}$). 
 According to Equation \eqref{eq:SB}, the payoff at $T_k$ is $F^\pi_{T_k}p(T_k)$ if $T_{k-1} < \tau^s \le T_k$, i.e.\ if surrender is exercised between $T_{k-1}$ and $T_k$. Summing the relevant dates, discounting and taking $P^*$-expectations lead to Equation \eqref{eq:SB2}. 
  
 These four components add up to the price of the variable annuity. Hence the value in \eqref{price: VA} is an IFA-free valuation by Theorem \ref{thm-FTIFA}.
  \end{proof}

Valuing the fund $F^\pi$ itself is  possible by risk-neutral pricing when mortality and surrender are independent of $S$ \emph{and} sufficiently tractable. Otherwise, as we will show in the later sections, this task can become quite involved.

In the following, it  therefore remains to propose a tractable framework where the associated components of Equation \eqref{price: VA} can be computed efficiently. We proceed in two steps - first, we will introduce appropriate enlargement of filtration techniques and second, we will introduce an affine framework which in our eyes is flexible enough to capture all stylised facts in the market and on the other side allows tractable pricing rules.

\section{Progressive filtration enlargements}\label{sec: progressive}

To obtain a tractable pricing rule, we introduce an additional structure on the insurance filtration $\bbH$ by utilizing the theory of progressive enlargements.
For simplicity, we concentrate on the case where $\bbH$ is given by the smallest possibly enlargement of the publicly available information, given by the filtration $\bbF$, in our context: this is the enlargement with mortality and surrender times only\footnote{The filtration $\bbH$ can  additionally be enlarged by independent information without changing the approach. For example this could be mortality of additional clients, if those data does not contain additional information, which is the case when the mortality intensity is $\bbF$-measurable. The study of more complicated settings with incomplete information  remains a topic for future research.}.

First, we concentrate on a single  random time $\tau=\tau^m$, which  models the random life time of the insured client.  We consider multiple times in Section \ref{sec: multiple stopping times}.
The main assumption we make is to  assume that $\{\tau >t\}$, $t \ge 0$, are atoms in the enlarged filtration $\bbH$:  assume that for all $t \ge 0$,
\begin{align}\label{eq:progressive_atom}
	\ccH_t \medcap \{\tau > t \} = \ccF_t \medcap \{\tau > t\}.
\end{align}
This means that for each $H \in \ccH_t$ there exists an $F \in \ccF_t$, such that $H \cap \{\tau > t\} = F \cap \{\tau >t \}$ and vice versa.
A classical and well-known example where \eqref{eq:progressive_atom} holds is the progressive enlargement of $\FF$ with the random time $\tau$. %

\begin{remark}[Progressive enlargement] Assume that for all $t \ge 0$,
\begin{align}\label{eq progressive enlargement}
 	\ccH_t := \sigma\big( \{ F \medcap \{\tau \le s \}: F \in \ccF_t, s \le t\}\big) =: \ccF_t \vee \{\tau \wedge t \},
\end{align}
then $\bbH$ is called the \emph{progressive enlargement} of $\FF$ with $\tau$.
Then \eqref{eq:progressive_atom} holds: indeed, this follows from
$$ F \medcap \{ \tau>s \} \medcap \{ \tau > t \} = F \medcap \{ \tau > t \}, $$
for $0 \le s \le t \le T $ and $F \in \ccF_t. $
For a detailed study and many references to related literature see \cite{AksamitJeanblanc}.   Typically, we will be interested in a filtration which contains  more information, like  the employment status, health status, surrender behaviour, etc. 	
\hfill $\diamondsuit$
\end{remark}

For a pricing rule, fix $t$ and consider a measure $P^*$ as in Theorem \ref{thm-FTIFA}, for example $P^*=\QcirP$ as in the QP-rule. Then, fair insurance pricing works as follows:  the insurance premium $p_t$ is computed via taking expectations under $P^*$ with respect to $\ccF_t$. 
In this regard, denote by
\begin{align}
	G_t = P^* (\tau > t \,|\, \ccF_t), \qquad t \in  \TT,
\end{align}
the \emph{Az\'ema supermartingale} under $P^*$. The Az\'ema supermartingale  decodes the survival probability of $\tau$ in the smaller filtration $\FF$.

\begin{proposition}\label{prop EXtau}
Assume $\FF \subset \bbH$, $\tau$ being a $\bbH$-stopping time, and that \eqref{eq:progressive_atom} holds.
Then, for any $\bbF$-adapted process  $A$ bounded from below,
\begin{align}\label{temp 133}
	E^*[A_\tau  \,|\,\ccH_t] &= \ind{\tau \le t} A_\tau + \ind{\tau > t, \, G_t>0} G_t^{-1}E^*\Big[A_\tau \ind{\tau > t } \, |\,\ccF_t\Big]
\end{align}
holds for all $t \in \TT$.
\end{proposition}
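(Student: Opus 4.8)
The plan is to split $A_\tau$ according to whether $\tau$ has already occurred by time $t$ and to treat the two resulting pieces separately. Writing $A_\tau = A_\tau \ind{\tau \le t} + A_\tau \ind{\tau > t}$, the first summand is already $\ccH_t$-measurable: since $\tau$ is a $\bbH$-stopping time and $A$ is $\bbF$-adapted (hence $\bbH$-adapted, because $\ccF_t \subseteq \ccH_t$), the quantity $A_{\tau \wedge t}\,\ind{\tau \le t} = A_\tau \ind{\tau \le t}$ is $\ccH_t$-measurable, so it passes through the conditional expectation unchanged and produces the first term on the right-hand side. All the work therefore concerns the second summand, for which I would establish the \emph{key lemma} of progressive enlargement, namely
\[
  E^*[A_\tau \ind{\tau > t} \mid \ccH_t] = \ind{\tau > t,\, G_t > 0}\, G_t^{-1}\, E^*[A_\tau \ind{\tau > t} \mid \ccF_t].
\]

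Before that I would reduce to the case $A \ge 0$: since $A$ is bounded from below, replacing $A$ by $A + c$ for a suitable constant $c \ge 0$ leaves the claimed identity invariant (a constant satisfies it directly, using that on $\{G_t = 0\}$ one has $\ind{\tau > t} = 0$ almost surely, see below) and makes all conditional expectations well defined in $[0,\infty]$. For nonnegative $A$ I would first treat bounded $A$ and then pass to the general case by monotone convergence. The candidate right-hand side is manifestly $\ccH_t$-measurable, because $G_t$ and $E^*[A_\tau \ind{\tau > t} \mid \ccF_t]$ are $\ccF_t$-measurable while $\ind{\tau > t}$ is $\ccH_t$-measurable. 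It then remains to verify the defining averaging property against every $H \in \ccH_t$.

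This is exactly where the atom assumption \eqref{eq:progressive_atom} enters, and it is the crux of the argument. Given $H \in \ccH_t$, choose $F \in \ccF_t$ with $H \cap \{\tau > t\} = F \cap \{\tau > t\}$. On the one hand, $E^*[\ind{H} A_\tau \ind{\tau > t}] = E^*[\ind{F} A_\tau \ind{\tau > t}] = E^*[\ind{F}\, E^*[A_\tau \ind{\tau > t} \mid \ccF_t]]$. On the other hand, multiplying the candidate by $\ind{H}$ and using $\ind{H}\ind{\tau > t} = \ind{F}\ind{\tau > t}$, the $\ccF_t$-measurable factors pull out under $E^*[\,\cdot \mid \ccF_t]$ and the only surviving conditional expectation is $E^*[\ind{\tau > t} \mid \ccF_t] = G_t$, which cancels the factor $G_t^{-1}$ and reproduces $E^*[\ind{F}\, \ind{G_t > 0}\, E^*[A_\tau \ind{\tau > t} \mid \ccF_t]]$. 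The two sides agree once the indicator $\ind{G_t > 0}$ is removed, which is legitimate because on $\{G_t = 0\}$ we have $E^*[\ind{\tau > t} \mid \ccF_t] = 0$, forcing $\ind{\tau > t} = 0$, and hence $A_\tau \ind{\tau > t} = 0$, $P^*$-almost surely there. Recombining with the $\{\tau \le t\}$ piece yields the stated formula.

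The main obstacle I anticipate is measure-theoretic rather than computational: making the passage from an arbitrary $H \in \ccH_t$ to its $\ccF_t$-representative $F$ fully rigorous, and handling the division by $G_t$ and the null set $\{G_t = 0\}$ simultaneously with the integrability bookkeeping (the reduction to $A \ge 0$ and the monotone-convergence step). None of this is deep, but it is precisely the place where \eqref{eq:progressive_atom} is indispensable and where one must take care not to divide by zero.
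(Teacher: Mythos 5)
Your proof is correct, but it runs along a genuinely different route than the paper's. You take the right-hand side of \eqref{temp 133} as a \emph{candidate} and verify the two defining properties of conditional expectation: $\ccH_t$-measurability and the partial averaging identity $E^*[\ind{H}\,\cdot\,] = E^*[\ind{H} A_\tau \ind{\tau>t}]$ for every $H \in \ccH_t$, using the atom condition \eqref{eq:progressive_atom} purely at the level of \emph{events} to trade $H$ for its $\ccF_t$-representative $F$ inside the expectation. The paper instead applies \eqref{eq:progressive_atom} to the random variable $E^*[A_\tau \ind{\tau>t}\,|\,\ccH_t]$ itself: a monotone class argument lifts the event-level identity to $\ccH_t$-measurable random variables, giving a representation $E^*[A_\tau\ind{\tau>t}\,|\,\ccH_t]\,\ind{\tau>t} = Y_t \ind{\tau>t}$ with $Y_t$ being $\ccF_t$-measurable (their Equation \eqref{eq: XY tau}), and then \emph{identifies} $Y_t = G_t^{-1} E^*[A_\tau\ind{\tau>t}\,|\,\ccF_t]$ by conditioning on $\ccF_t$ and the tower property. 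So the paper derives the formula rather than guessing it, and its argument is shorter; your version avoids the monotone class theorem entirely (only the set-level statement of \eqref{eq:progressive_atom} is ever used) at the price of the guess-and-verify structure and the explicit integrability bookkeeping (reduction to $A \ge 0$, bounded case plus monotone convergence). Both arguments share the same decomposition $A_\tau = A_\tau\ind{\tau\le t} + A_\tau\ind{\tau>t}$ and the same observation that $\{\tau>t\}\cap\{G_t=0\}$ is $P^*$-null, which you actually spell out more carefully than the paper does (the paper simply says one "may assume $G_t>0$ everywhere"); your handling of the additive-constant reduction for processes merely bounded from below is likewise a point where your write-up is more explicit than the original.
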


\begin{proof}
	Consider $A_\tau = A_\tau \ind{\tau \le t} + A_\tau \ind{\tau >t}$. Since
    $ A_\tau \ind{\tau \le t}$ is $\ccH_t$-measurable, we obtain $E^*[A_\tau \ind{\tau \le t}|\ccH_t] = A_\tau \ind{\tau \le t} $,  the first addend of \eqref{temp 133}.

	Next, we consider $A_\tau \ind{\tau >t}$. Note that $E^*[A_\tau \ind{\tau >t} |\ccF_t]$ vanishes on the set $\{\tau > t, \, G_t=0\}$, and so does the second addend in  \eqref{temp 133}. So, we may assume $G_t>0$ everywhere. Now, \eqref{eq:progressive_atom} implies that for any $t \ge 0$ and any $\ccH_t$-measurable random variable $\tilde A_t$ we find a $\ccF_t$-measurable random variable $Y_t$, such that
	\begin{align}\label{eq: XY tau}
		\tilde A_t \ind{\tau > t} = Y_t \ind{\tau > t}
	\end{align}
	by an application of the monotone class theorem.  Using $E^*[A_\tau \ind{\tau > t}|\ccH_t]$ for $ \tilde A_t$ in this equation and taking conditional expectation with respect to $\ccF_t$, we obtain that
	\begin{align} 		
	G_t^{-1} E^*\Big[ E^*[A_\tau|\ccH_t] \ind{\tau > t} |\ccF_t \Big] &=
		G_t^{-1} E^*\Big[ Y_t \ind{\tau > t} |\ccF_t \Big] = Y_t. %
	\end{align}
On the other hand,
\begin{align*}
	E^*\Big[ E^*[A_\tau|\ccH_t] \ind{\tau > t} |\ccF_t \Big] &= E^*\Big[ A_\tau \ind{\tau > t} |\ccF_t \Big] 
\end{align*}
and the proof of the proposition is finished. 
\end{proof}

We add a simple corollary for the QP-rule which follows directly	by Equation \eqref{PQ-rule:filtation}. 

\begin{corollary}\label{cor:QP}
	Assume that $P^*=\QcirP$. Then,
	\begin{align}
		E^*\Big[A_\tau \ind{\tau > t } \, |\,\ccF_t\Big] = E_Q\Big[ E_P[A_\tau \ind{\tau > t }|\ccF_T] \, | \, \ccF_t\Big].
	\end{align}
\end{corollary}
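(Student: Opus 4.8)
The plan is to read the identity off directly from the QP-rule \eqref{PQ-rule:filtation}. That rule asserts, for $P^* = \QcirP$ and any $X$ that lies in $L^1$ or is bounded from below, that $E_{\subQcirP}[X \mid \ccF_t] = E_Q[E_P[X \mid \ccF_T] \mid \ccF_t]$. I would therefore apply it with the particular choice $X := A_\tau \ind{\tau > t}$; once I have checked that this $X$ belongs to the admissible class, the corollary is exactly \eqref{PQ-rule:filtation} with this $X$ substituted on both sides.

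The only point requiring verification is the lower-bound hypothesis. By the standing assumption carried over from Proposition \ref{prop EXtau}, the $\bbF$-adapted process $A$ is bounded from below, say $A_s \ge m$ for all $s$ and some $m \in \R$. Evaluating along $\tau$ gives $A_\tau \ge m$, and multiplying by the indicator yields a random variable equal to $A_\tau$ on $\{\tau > t\}$ and to $0$ on the complement; hence $A_\tau \ind{\tau > t} \ge \min(m,0)$ everywhere, so $X$ is bounded from below and \eqref{PQ-rule:filtation} applies. The same bound shows that the inner projection $E_P[A_\tau \ind{\tau > t} \mid \ccF_T]$ is again bounded from below, so the outer $Q$-expectation is well defined and the nested expression on the right-hand side makes sense.

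I do not expect any substantive obstacle: the corollary is a one-line consequence of \eqref{PQ-rule:filtation}, with the entire content being the substitution $X = A_\tau \ind{\tau > t}$ together with the trivial observation that this random variable inherits the lower bound of $A$. The only conceivable subtlety, namely well-definedness of the iterated conditional expectation, is dispatched by the same lower bound, as noted above.
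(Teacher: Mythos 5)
Your proof is correct and takes exactly the paper's route: the paper also obtains Corollary \ref{cor:QP} as an immediate application of the QP-rule \eqref{PQ-rule:filtation} with $X = A_\tau \ind{\tau > t}$. Your additional verification that $A_\tau \ind{\tau > t}$ inherits the lower bound from $A$ (so that \eqref{PQ-rule:filtation} indeed applies) is a sound, if routine, elaboration of what the paper leaves implicit.
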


\begin{example}[Doubly stochastic stopping times]
\label{ex:doubly stochastic}
The most important example regarding modeling with affine processes will be a doubly stochastic setting, which we introduce now. 
Consider an $\FF$-adapted, increasing process  $\Lambda=(\Lambda_t)_{t \in \TT}$ with $\Lambda_0=0$ together with an independent  exponential random variable $E$ and let\footnote{Such a construction is called \emph{doubly stochastic}. We refer to \cite{BieleckiRutkowski2002, Bremaud1981, GehmlichSchmidt2016MF} for details and references. If $\Lambda_t=\int_0^t \lambda_sds$, then $\lambda$ is called the \emph{intensity} of $\tau$. If the intensity is deterministic, it is called \emph{hazard rate}. }
	\begin{align} \label{def tau doubly stochastic}
	   \tau = \inf\{ t \in \TT: \Lambda_t \ge E \},
	\end{align}
	with the convention that $\inf \emptyset = T+1$.
Then, $\tau$ is not an $\FF$-stopping time, but it is a stopping time in the progressive enlargement $\bbH$. In particular, 	$$  
	G_t = P^*(\tau>t | \ccF_t)  = P^*( \Lambda_t < E |\ccF_t)  = e^{-\Lambda_t} $$
	for any $t \in \TT$.
	 Moreover,
	\begin{align}
		E^*[A_\tau \ind{\tau > t}| \ccH_t] &=
		\ind{\tau >t} G_t^{-1}
		\sum_{s=t+1}^T E^*[ A_s P^*(\tau = s \,|\,\ccF_s)|\ccF_t] \nonumber \\
		&= 
		\ind{\tau >t} e^{\Lambda_t} \sum_{s=t+1}^T E^*\big[A_s \big( e^{-\Lambda_{s-1}} - e^{-\Lambda_{s}} \big) |\ccF_t\big]. \label{eq affine Xtau}
	\end{align}
	Additionally, if $P^*=\QcirP$, then 
	$$ E^*\big[A_s \big( e^{-\Lambda_{s-1}} - e^{-\Lambda_{s}} \big) |\ccF_t\big] = 
	E_Q[A_s \big( e^{-\Lambda_{s-1}} - e^{-\Lambda_{s}} \big) |\ccF_t\big], $$
	 such that the evaluation of the payment stream $A$ (typically depending on financial quantities like interest rates and stock markets) can be done in a market-consistent way. 
	These formulae are key results for the  valuation of  a large number of hybrid products.
\hfill $\diamondsuit$
\end{example}
An additional difficulty arises when $A$ is not $\FF$-measurable. Here one is able to exploit the structure of the progressive enlargement in the case where $\tau$ is honest. For example, Corollary 5.12 in \cite{AksamitJeanblanc} allows to decompose $X$ in several $\FF$-adapted components on random intervals depending solely on $\tau$.

\begin{proposition}\label{prop EFXtau}
Consider $P^*=\QcirP$ and 
assume $\FF \subset \bbH$, $\tau$ being a $\bbH$-stopping time, and that \eqref{eq:progressive_atom} holds.
Then, for any $\bbH$-adapted process  $A$ and any $\ccF_T$-measurable random variable $F_T$, both bounded from below, it holds that for all $t \in \TT$,
\begin{align*}
	E_{\subQcirP}[F_T A_\tau  |\ccH_t] &= \ind{\tau \le t} A_\tau  E_{\subQcirP}[F_T|\ccH_t]\\
&+ \ind{\tau > t,\, G_t >0 }\; G_t^{-1}E_Q\Big[F_T E_P[A_\tau \ind{\tau > t} |\ccF_T] |\ccF_t\Big].
\end{align*}
\end{proposition}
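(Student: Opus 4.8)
The plan is to mirror the proof of Proposition \ref{prop EXtau}, splitting $A_\tau$ on the events $\{\tau\le t\}$ and $\{\tau>t\}$, and then to invoke the QP-rule \eqref{PQ-rule:filtation} at the very end in order to factor the financial quantity $F_T$ out from the insurance quantity. The only genuinely new ingredient relative to Proposition \ref{prop EXtau} is the final application of \eqref{PQ-rule:filtation}, which is available precisely because $P^*=\QcirP$.

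First I would write $F_T A_\tau = F_T A_\tau\ind{\tau\le t} + F_T A_\tau\ind{\tau>t}$ and treat the two addends separately. Since $\tau$ is an $\bbH$-stopping time and $A$ is $\bbH$-adapted, the stopped value $A_{\tau\wedge t}$ is $\ccH_t$-measurable, so $A_\tau\ind{\tau\le t}=A_{\tau\wedge t}\ind{\tau\le t}$ is $\ccH_t$-measurable and can be pulled out of the conditional expectation. This gives $E^*[F_T A_\tau\ind{\tau\le t}\,|\,\ccH_t]=\ind{\tau\le t}\,A_\tau\,E^*[F_T\,|\,\ccH_t]$, the first term of the claim. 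Note this is the point where the mere $\bbH$-adaptedness of $A$ (as opposed to $\FF$-adaptedness) is visible: the factor $E^*[F_T\,|\,\ccH_t]$ cannot be simplified further.

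Second, for the $\ind{\tau>t}$-part I would repeat verbatim the atom argument from Proposition \ref{prop EXtau}. Setting $Z_t:=E^*[F_T A_\tau\ind{\tau>t}\,|\,\ccH_t]$ and using that $\ind{\tau>t}$ is $\ccH_t$-measurable, one has $\ind{\tau>t}\,E^*[F_T A_\tau\,|\,\ccH_t]=Z_t$. By \eqref{eq:progressive_atom} and the monotone class argument behind \eqref{eq: XY tau}, there is an $\ccF_t$-measurable $Y_t$ with $Z_t\ind{\tau>t}=Y_t\ind{\tau>t}$. Taking $E^*[\,\cdot\,|\,\ccF_t]$ and using $E^*[\ind{\tau>t}\,|\,\ccF_t]=G_t$ together with the tower property (pulling the $\ccH_t$-measurable $\ind{\tau>t}$ inside) yields $Y_t\,G_t=E^*[F_T A_\tau\ind{\tau>t}\,|\,\ccF_t]$, hence $Y_t=G_t^{-1}E^*[F_T A_\tau\ind{\tau>t}\,|\,\ccF_t]$ on $\{G_t>0\}$. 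Because $\{\tau>t,\,G_t=0\}$ is $P^*$-null (its probability equals $E^*[\ind{G_t=0}G_t]=0$), one concludes a.s. that $\ind{\tau>t}\,E^*[F_T A_\tau\,|\,\ccH_t]=\ind{\tau>t,\,G_t>0}\,G_t^{-1}E^*[F_T A_\tau\ind{\tau>t}\,|\,\ccF_t]$.

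Third — the new step — I would evaluate the remaining $\ccF_t$-conditional $P^*$-expectation by the QP-rule. Since $P^*=\QcirP$, identity \eqref{PQ-rule:filtation} applies to $X=F_T A_\tau\ind{\tau>t}$ and gives $E^*[F_T A_\tau\ind{\tau>t}\,|\,\ccF_t]=E_Q\big[E_P[F_T A_\tau\ind{\tau>t}\,|\,\ccF_T]\,\big|\,\ccF_t\big]$; as $F_T$ is $\ccF_T$-measurable it factors out of the inner $P$-expectation, $E_P[F_T A_\tau\ind{\tau>t}\,|\,\ccF_T]=F_T\,E_P[A_\tau\ind{\tau>t}\,|\,\ccF_T]$, producing exactly the second term of the statement. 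Adding the two contributions finishes the proof. The main obstacle is purely technical bookkeeping: confirming the $\ccH_t$-measurability of $A_\tau\ind{\tau\le t}$ for $\bbH$-adapted $A$, discarding the $P^*$-null set $\{\tau>t,\,G_t=0\}$, and checking the integrability needed to apply \eqref{PQ-rule:filtation} to the product $F_T A_\tau\ind{\tau>t}$ (the hypothesis ``bounded from below'' should be read so that the QP-rule applies, e.g.\ by first treating nonnegative $A$ and $F_T$ and then passing to the limit). Conceptually, the content is simply the atom reduction combined with the QP-rule, the latter being what cleanly separates the $Q$-priced financial factor $F_T$ from the $P$-projected insurance factor $E_P[A_\tau\ind{\tau>t}\,|\,\ccF_T]$.
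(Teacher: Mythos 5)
Your proposal is correct and takes essentially the same approach as the paper: the paper's proof of Proposition \ref{prop EFXtau} consists precisely of the remark that it "follows as for Proposition \ref{prop EXtau}", and your argument fills in exactly that — the split on $\{\tau\le t\}$ and $\{\tau>t\}$, the atom/monotone-class reduction from \eqref{eq:progressive_atom}, the dismissal of $\{\tau>t,\,G_t=0\}$, and the concluding application of the QP-rule \eqref{PQ-rule:filtation} to pull the $\ccF_T$-measurable factor $F_T$ out of the inner $P$-expectation. Your closing caveat on integrability of the product $F_T A_\tau$ is handled at the same (informal) level of rigor as the paper itself, so nothing further is needed.
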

The proof follows as for Proposition \ref{prop EXtau}.

\subsection{Multiple stopping times}\label{sec: multiple stopping times}
For the insurance company it is of course important to consider more than one stopping time, and to allow for dependence between these stopping times. We generalise the previous framework by considering a countable number of atoms and provide the associated generalisations of the previously obtained results. %

We start by providing the key result for multiple stopping times. To this end, 
consider  for each $t \in \TT$ $\ccH_t$-measurable sets   $(P_{t}^1,P_t^2,\dots,P_t^n)$  such that  $P_t^i \cap P_t^j = \emptyset$ for $i \neq j$.
Assume that 
	\begin{align}\label{eq:progressive_atom general}
		\ccH_t \medcap P_t^i = \ccF_t \medcap P_t^i,
		\qquad 1 \le i \le n, \ t \in \TT, 
	\end{align}
	and denote by \begin{align}\label{azema general}
		G_t^i := P^*(P_t^i \, | \,  \ccF_t), \quad t \in \TT,
	\end{align}
	the respective generalization of the Az\'ema supermartingale. Note that without additional assumptions $G^i$ does not need to be a supermartingale. Nevertheless, we have the following generalization of Proposition \ref{prop EXtau} directly formulated in terms of $\QcirP$. The respective version with general $P^*$ follows easily. Set $\Omega_t = \sum_{i=1}^n P_t^i$.

\begin{proposition}\label{prop EXtau general}
Assume $\FF \subseteq \bbH$, and that \eqref{eq:progressive_atom general} and \eqref{azema general} hold.
Then, for any $\ccH_T$-measurable  random variable   $A$ bounded from below,
\begin{align}\label{temp 133 g}
	 \Ind_{\Omega_t} E_{\subQcirP}[ A  \,|\, \ccH_t]  &= \sum_{i =1}^n
	\Ind_{P_t^i \cap \{G_t^i>0 \}} (G_t^i)^{-1}E_Q\Big[ E_P[A \, \Ind_{P_t^i} |\,\ccF_T] |\ccF_t\Big]
\end{align}
holds for all $t \in \TT$.
\end{proposition}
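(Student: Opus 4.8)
The plan is to mimic the single-time proof of Proposition \ref{prop EXtau}, but handle the $n$ disjoint atoms $(P_t^i)_{i=1}^n$ separately and then sum. First I would multiply through by $\Ind_{\Omega_t} = \sum_i \Ind_{P_t^i}$ and, using disjointness of the $P_t^i$, reduce the claim to establishing the identity atom-by-atom, namely
\begin{align*}
\Ind_{P_t^i} E_{\subQcirP}[A \mid \ccH_t] = \Ind_{P_t^i \cap \{G_t^i > 0\}} (G_t^i)^{-1} E_Q\Big[ E_P[A\, \Ind_{P_t^i} \mid \ccF_T] \mid \ccF_t \Big]
\end{align*}
for each fixed $i$. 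As in the earlier proof, I would first dispose of the set $\{G_t^i = 0\}$: on $\{G_t^i = 0\}$ the inner $\ccF_t$-conditional expectation of $A\,\Ind_{P_t^i}$ vanishes (since $A$ is bounded below, this is the standard argument that a nonnegative random variable with zero conditional survival weight contributes nothing), so both sides agree there and we may henceforth work on $\{G_t^i > 0\}$.

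The heart of the argument is the atom property \eqref{eq:progressive_atom general}: since $\ccH_t \medcap P_t^i = \ccF_t \medcap P_t^i$, the monotone class theorem gives, for the $\ccH_t$-measurable random variable $\tilde A_t := E_{\subQcirP}[A \mid \ccH_t]$, an $\ccF_t$-measurable random variable $Y_t^i$ with $\tilde A_t \Ind_{P_t^i} = Y_t^i \Ind_{P_t^i}$. I would then take $\ccF_t$-conditional expectation of this equality: on the left, the tower property collapses the inner $\ccH_t$-expectation against the $\ccF_t$-measurable indicator issue so that $E_{\subQcirP}[\tilde A_t \Ind_{P_t^i} \mid \ccF_t] = E_{\subQcirP}[A\, \Ind_{P_t^i} \mid \ccF_t]$, while on the right $E_{\subQcirP}[Y_t^i \Ind_{P_t^i} \mid \ccF_t] = Y_t^i\, G_t^i$ by definition \eqref{azema general} of $G_t^i$. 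Dividing by $G_t^i$ (legitimate on $\{G_t^i > 0\}$) identifies $Y_t^i = (G_t^i)^{-1} E_{\subQcirP}[A\,\Ind_{P_t^i} \mid \ccF_t]$, and multiplying back by $\Ind_{P_t^i}$ recovers $\Ind_{P_t^i} E_{\subQcirP}[A \mid \ccH_t]$ on the left.

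Finally, to pass from $P^*=\QcirP$ expectations to the nested $Q$-then-$P$ form on the right-hand side of \eqref{temp 133 g}, I would invoke the QP-rule \eqref{PQ-rule:filtation}, which rewrites $E_{\subQcirP}[A\, \Ind_{P_t^i} \mid \ccF_t]$ as $E_Q\big[ E_P[A\,\Ind_{P_t^i} \mid \ccF_T] \mid \ccF_t\big]$; this step needs $A\,\Ind_{P_t^i}$ to be bounded below, which holds since $A$ is. Summing the resulting atom-wise identities over $i=1,\dots,n$ yields \eqref{temp 133 g}. The main obstacle I anticipate is purely bookkeeping rather than conceptual: verifying that the tower/measurability manipulation $E_{\subQcirP}[\tilde A_t \Ind_{P_t^i}\mid\ccF_t] = E_{\subQcirP}[A\Ind_{P_t^i}\mid\ccF_t]$ is legitimate even though $\Ind_{P_t^i}$ is only $\ccH_t$- (not $\ccF_t$-)measurable — this is exactly where the atom property \eqref{eq:progressive_atom general}, rather than mere $\ccF_t$-measurability of the indicator, does the work, and I would take care to phrase it via the $Y_t^i$ substitution exactly as in Proposition \ref{prop EXtau} rather than pulling $\Ind_{P_t^i}$ out of a conditional expectation illegitimately.
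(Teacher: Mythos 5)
Your proposal is correct and takes essentially the same route as the paper's own proof: the same atom-by-atom decomposition over $\Ind_{P_t^i}$, the same monotone-class representation $E_{\subQcirP}[A\mid\ccH_t]\,\Ind_{P_t^i}=Y_t^i\,\Ind_{P_t^i}$ with $Y_t^i$ being $\ccF_t$-measurable, identification of $Y_t^i$ by taking $\ccF_t$-conditional expectations against $G_t^i$, and a final application of the QP-rule \eqref{PQ-rule:filtation} before summing over $i$. Your handling of $\{G_t^i=0\}$ (both sides vanish there almost surely, since $P_t^i\cap\{G_t^i=0\}$ is a null set) matches the paper's treatment, so no gap remains.
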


\begin{proof}
    First, we decompose
    \begin{align}\label{temp286}
    	\Ind_{\Omega_t} E_{\subQcirP}[A  | \ccH_t] = \sum_{i = 1} ^n  \Ind_{P_t^i} E_{\subQcirP}[A  \, \Ind_{P_t^i}| \ccH_t].
    \end{align}
	For the following, we fix $i$.
	From \eqref{eq:progressive_atom general}, a monotone class arguments gives the following:  for a $\ccH_t$-measurable random variable $X_t$ we can find a $\ccF_t$-measurable random variable $\tilde Y_t$, such that
	\begin{align}\label{eq:representation from atoms}
		X_t \Ind_{P_t^i} =  \tilde Y_t \Ind_{P_t^i}.
	\end{align}
	Hence, there exists an $\ccF_t$-measurable random variable $Y_t$ such that
	\begin{align}\label{temp291}
		 E_{\subQcirP}[A  | \ccH_t] \,  \Ind_{P_t^i}  = Y_t \Ind_{P_t^i}.
	\end{align}
 Taking conditional expectations with respect to $\ccF_t$ and multiplying with $\Ind_{P_t^i}$ yields that
	\begin{align*}
		\Ind_{P_t^i} E_{\subQcirP}\Big[ E_{\subQcirP}[A  \, \Ind_{P_t^i}| \ccH_t]| \ccF_t \Big]
		&= Y_t G_t^i \, \Ind_{P_t^i}.
	\end{align*}
	The left hand side equals $\Ind_{P_t^i} E_{\subQcirP}[ A  \, \Ind_{P_t^i} | \ccF_t ] $, which consequently vanishes when $G_t^i=0$. Inserting \eqref{temp291},  we obtain that
		\begin{align*}
		\Ind_{P_t^i \cap \{G_t^i>0\}}(G_t^i)^{-1} E_{\subQcirP}\big[ A \, \Ind_{P_t^i}| \ccF_t \big]
		&= E_{\subQcirP}\big[ A \,  \Ind_{P_t^i} | \ccH_t \big] \Ind_{P_t^i}.
	\end{align*}
	The claim now follows by combining this with \eqref{PQ-rule:filtation} and  \eqref{temp286}.
\end{proof}

\subsection{Two stopping times}\label{sec: two stopping times}
The most relevant case in this paper is the case with two stopping times, 
mortality $\tau^m$ and surrender $\tau^s$. 
In many papers it is assumed that they are independent or conditionally independent. This can be a serious restriction for the applications we have in mind: indeed, dependence between remaining life time and surrender is of course possible and should be taken into account. We may use the above result to do so.
Motivated by this, we will develop  Proposition \ref{prop EXtau general}  further in the case of two stopping times.

Consider two $\FF$-adapted, increasing processes $\Lambda^m$, and $\Lambda^s$ associated to $\tau^m$ and $\tau^s$, respectively. Assume that there exist two standard exponential random variables $E^m,$ and $E^s$, independent of $\ccF_T$ having continuous survival copula
	$$ \bar C(u_1,u_2):= P\big(\exp(-E^m)<u_1,\exp(-E^s)<u_2\big). $$
	Let
	$ \tau^m = \inf\{ t \in \TT: \Lambda_t^m \ge E^m\},$ and $\tau^s = \inf\{ t \in \TT: \Lambda_t^s \ge E^s\}  $; again we impose the convention that $\inf \emptyset = T+1$.
	Using independence of $E^m$, $E^s$ and $\ccF_T$, we obtain that
	\begin{align} \label{two stopping times:basic equation}
		P(\tau^m > t_1, \tau^s > t_2| \ccF_T ) &= P(\Lambda^m_{t_1} < E^m, \Lambda^s_{t_2} < E^s | \ccF_T )  \notag \\
		&= P\big(\exp(-E^m)<e^{-\Lambda^m_{t_1}},\ \exp(-E^s)<e^{-\Lambda^s_{t_2}} | \ccF_T\big) \notag\\
		&= \bar C\big(e^{-\Lambda^m_{t_1}},\ e^{-\Lambda^s_{t_2}}\big). 
	\end{align}
	
\begin{example}[Conditional independence] 
\label{ex:conditional independence}
If in addition, $E^m$ and $E^s$ are independent, then $\tau^m$ and $\tau^s$ are independent conditional on $\ccF_T$ and $C(u,v)=u \cdot v$. This simplifies the computations significantly since then 
\begin{align} \label{ex4.2}
P(\tau^m> t_1, \tau^s > t_2| \ccF_T )=\exp(-\Lambda^m_{t_1}-\Lambda^s_{t_2}).\end{align}
This is a highly tractable extension of the doubly stochastic setting from Example \ref{ex:doubly stochastic}.
	\hfill $\diamondsuit$
\end{example}

Fix $t \in \TT$ and consider the disjoint $\ccH_t$-measurable sets
\begin{align*}
	P_t^{1,1} &:= \{\tau^m >t, \tau^s >t\}, \\
	P_t^{2,u} &:= \{\tau^m >t, \tau^s = u\}, \ u = 0, \dots, t-1, \\
	P_t^{3,u} &:= \{\tau^m = u, \tau^s >t \},\ u=0,\dots,t-1, \\
	P_t^{4,u,v} &:= \{ \tau^m =u, \tau^s = v\},\ u,v=0,\dots,t-1,
\end{align*}
which will take the role of $P_t^1,\dots,P_t^n$ in the previous section.

Let 
\begin{align}
    \Gamma(t_1,t_2) :=&\  P(\tau^m> t_1, \tau^s > t_2| \ccF_T ) 
    =  P(\tau^m> t_1, \tau^s > t_2| \ccF_{\max \{t_1,t_2\}} ) \notag \\
    =&\     \bar C(\exp(-\Lambda^m_{t_1}),\exp(-\Lambda^s_{t_2})).
\end{align}

 Then we obtain from \eqref{two stopping times:basic equation} that
\begin{align}\label{G11}
\begin{aligned}	
	G_t^{1,1} &: = P(\tau^m > t, \tau^s > t| \ccF_T) =\Gamma(t,t), \\
	G_t^{2,u} &:= P(\tau^m > t, \tau^s  =u| \ccF_T ) = P(\tau^m > t, \tau^s  >u-1| \ccF_T ) - P(\tau^m > t, \tau^s  >u| \ccF_T ) \\ &\,\,=
        \Gamma(t,u-1)-\Gamma(t,u), \\
	G_t^{3,u} &:= P(\tau^m =u, \tau^s  >t| \ccF_T ) = \Gamma(u-1,t)-\Gamma(u,t), \\
	G_t^{4,u,v} &:= P(\tau^m =u, \tau^s  =v| \ccF_T )  \\
	&\,\,= \Gamma(u-1,v-1)-\Gamma(u,v-1)-\Gamma(u-1,v)+\Gamma(u,v),
	\end{aligned}
\end{align}
where again $u,v \in \{0,\dots,t-1\}$.

 Now, consider two $\FF$-adapted payment streams $A^m$ and $A^s$. If the insurer dies at $\tau^m$ before surrendering, he will receive $A^m_{\tau^m}$ while if he first surrenders, he will receive $A^s_{\tau^s}$. Precisely, this defines the following payoff:
 \begin{align}\label{eq: XtT two stopping times}
 	X_{t,T} =  \ind{\tau^m > t, \tau^s>t } ( \ind{\tau^m < T, \tau^m < \tau^s} A^m_{\tau^m} +  \ind{\tau^s < T, \tau^s \le  \tau^m} A^s_{\tau^s}).
 \end{align}

\begin{proposition}\label{prop 5.7}
If we consider $P^*=\QcirP$, then 
	for the payoff in \eqref{eq: XtT two stopping times},
	\begin{align}
		E_{\subQcirP}[ X_{t,T} | \ccH_t] &= \ind{\tau^m > t, \tau^s>t } (G_t^{1,1})^{-1} \sum_{u=t+1}^{T-1}  E_Q[A_u^m \,G_{u}^{3,u} + A_u^s \,G_{u-1}^{2,u} | \ccF_t]. \notag
	\end{align}
\end{proposition}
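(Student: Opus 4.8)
The plan is to apply Proposition~\ref{prop EXtau general} to the four families of atoms $P_t^{1,1}$, $P_t^{2,u}$, $P_t^{3,u}$, $P_t^{4,u,v}$ introduced above, exploiting that the payoff $X_{t,T}$ is concentrated on the single atom $P_t^{1,1}$. Indeed, the outer factor $\ind{\tau^m > t,\, \tau^s > t}$ in \eqref{eq: XtT two stopping times} shows that $X_{t,T} = \Ind_{P_t^{1,1}} X_{t,T}$, and since the atoms are pairwise disjoint we get $X_{t,T}\,\Ind_{P_t^i} = 0$ for every atom $P_t^i \neq P_t^{1,1}$. Hence in the sum on the right-hand side of \eqref{temp 133 g} only the index $(1,1)$ contributes, giving
\begin{align*}
\Ind_{\Omega_t}\, E_{\subQcirP}[X_{t,T} | \ccH_t] = \Ind_{P_t^{1,1}\cap\{G_t^{1,1}>0\}}\,(G_t^{1,1})^{-1}\, E_Q\big[ E_P[X_{t,T} | \ccF_T] | \ccF_t \big].
\end{align*}
As $P_t^{1,1} = \{\tau^m > t,\, \tau^s > t\} \in \ccH_t$ and $X_{t,T}$ vanishes off this set, the conditional expectation $E_{\subQcirP}[X_{t,T}|\ccH_t]$ is itself supported on $P_t^{1,1} \subseteq \Omega_t$, so the prefactor $\Ind_{\Omega_t}$ may be dropped.

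It then remains to evaluate the inner $\bbF$-conditional expectation $E_P[X_{t,T} | \ccF_T]$. Since $A^m$ and $A^s$ are $\bbF$-adapted, the variables $A^m_u$ and $A^s_u$ are $\ccF_u \subseteq \ccF_T$-measurable and can be pulled out after splitting according to the value of the triggering stopping time. The death leg $\ind{\tau^m < T,\, \tau^m < \tau^s} A^m_{\tau^m}$ (restricted to $P_t^{1,1}$) forces $\tau^m = u$ for some $u \in \{t+1,\dots,T-1\}$ together with $\tau^s > u$, because of the \emph{strict} inequality $\tau^m < \tau^s$; by the definition of $G^{3,u}$ in \eqref{G11} it therefore contributes $\sum_{u=t+1}^{T-1} A^m_u\, P(\tau^m = u,\, \tau^s > u | \ccF_T) = \sum_{u=t+1}^{T-1} A^m_u\, G_u^{3,u}$. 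Symmetrically, the surrender leg $\ind{\tau^s < T,\, \tau^s \le \tau^m} A^s_{\tau^s}$ forces $\tau^s = u$ with $\tau^m \ge u$, that is $\tau^m > u-1$, because the inequality $\tau^s \le \tau^m$ is \emph{non-strict}; by the definition of $G^{2,u}$ it thus contributes $\sum_{u=t+1}^{T-1} A^s_u\, P(\tau^m > u-1,\, \tau^s = u | \ccF_T) = \sum_{u=t+1}^{T-1} A^s_u\, G_{u-1}^{2,u}$. Adding the two legs yields
\begin{align*}
E_P[X_{t,T} | \ccF_T] = \sum_{u=t+1}^{T-1}\big( A^m_u\, G_u^{3,u} + A^s_u\, G_{u-1}^{2,u}\big).
\end{align*}

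Substituting this into the reduction above, and using that here $G_t^{1,1} = \bar C(e^{-\Lambda^m_t}, e^{-\Lambda^s_t})$ is strictly positive on $P_t^{1,1}$ (its arguments being strictly positive), so that $\Ind_{P_t^{1,1}\cap\{G_t^{1,1}>0\}} = \ind{\tau^m > t,\, \tau^s > t}$, gives exactly the claimed identity. The only delicate point is the index bookkeeping in the middle paragraph: the asymmetry between $G_u^{3,u}$ for the death benefit and $G_{u-1}^{2,u}$ for the surrender benefit is forced by the strict inequality $\tau^m < \tau^s$ in the death leg versus the non-strict $\tau^s \le \tau^m$ in the surrender leg, and the reindexings $t \mapsto u$ respectively $t \mapsto u-1$ in \eqref{G11} must be tracked consistently to land on the correct Az\'ema-type terms.
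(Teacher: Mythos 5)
Your proof is correct and follows essentially the same route as the paper: you decompose the two legs of the payoff according to the value of the triggering stopping time (tracking the strict inequality $\tau^m<\tau^s$ versus the non-strict $\tau^s\le\tau^m$), identify the resulting events with $P_u^{3,u}$ and $P_{u-1}^{2,u}$ from \eqref{G11}, and apply Proposition \ref{prop EXtau general} to the single relevant atom $P_t^{1,1}$ — which is exactly the paper's (much terser) argument. One small caveat: your justification that $G_t^{1,1}=\bar C(e^{-\Lambda^m_t},e^{-\Lambda^s_t})>0$ ``since its arguments are strictly positive'' is not valid for a general survival copula (the countermonotonic copula vanishes at points with both coordinates positive), but this is immaterial because $P_t^{1,1}\cap\{G_t^{1,1}=0\}$ is a null set (take expectations of $\Ind_{P_t^{1,1}}\ind{G_t^{1,1}=0}$ conditionally on $\ccF_T$), so $\Ind_{P_t^{1,1}\cap\{G_t^{1,1}>0\}}=\Ind_{P_t^{1,1}}$ almost surely in any case.
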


\begin{proof}
	For the  part with $A^m$ we obtain the following decomposition:
	\begin{align*}
			\ind{t<\tau^m< T, \ \tau^m < \tau^s} A^m_{\tau^m} %
			&= \sum_{u=t+1}^{T-1} \ind{\tau^m=u, \tau^s > u} A^m_u 
			= \sum_{u=t+1}^{T-1}\Ind_{P_{u}^{3,u}} \, A^m_u.
	\end{align*}
	Similarly,
	\begin{align*}
		\ind{t<\tau^s < T, \tau^m \ge  \tau^s} A^s_{\tau^s} &= \sum_{u=t+1}^{T-1} \ind{\tau^s=u, \tau^m > u-1} A^s_u
		= \sum_{u=t+1}^{T-1}\Ind_{P_{u-1}^{2,u}} A^s_u.
	\end{align*}
	The result now follows by applying Proposition \ref{prop EXtau general}.
\end{proof}

\section{An affine framework}\label{sec:affine}

Affine processes are a highly tractable class of processes and highly suited to the question at hand, in particular for modelling stochastic mortality term structures and surrender times which depend on the evolution of the stock market. We refer to  \cite{KellerResselSchmidtWardenga2019} for a detailed treatment of affine processes, including affine processes in discrete time.

In this section we propose a new framework in discrete time for the valuation of insurance products linked to financial markets. This generalizes existing approaches in three aspects: first, we incorporate the QP-approach in a general setting which allows for dependencies between financial markets, mortality, surrender and further factors. Note that this requires modelling the affine process under $P$ for the insurance quantities and under $Q$ for the parts of the affine process referring to the financial market. Second, it is important to acknowledge that the insurance part of the contract is monitored in discrete time, however. We therefore consider a discrete affine framework, i.e.~an affine process with stochastic discontinuities. In the existing works in the insurance literature, only stochastically continuous affine processes are used. Third,  we  introduce valuation formulas for more than one stopping time, to include stochastic mortality and surrender, for example. 

In this regard,  assume that there is a driving $d'$-dimensional process $Z=(Z_t)_{t \in \TT}$. The process $Z$ is  \emph{affine}, if it is a Markov process and its characteristic function is exponential affine. We additionally require the existence of exponential moments, such that
\begin{align}\label{eq:affineP}
	E[\exp(u Z_{t+1}) |Z_{t}] = \exp\big(A( u)+B( u) \cdot Z_t \big)
\end{align}
for $0 \le t < T$ and all $u \in \RR^{d'}$; with deterministic functions $A: \RR^{d'} \to \RR$ and $B: \RR^{d'} \to \RR^{d'}$. We will denote the state space of $Z$ by   $\cZ$. Typically  $\cZ=\RR^m_{\ge 0} \times \RR^n $ with $m+n=d'$.

In the light of the QP-rule it will be important to distinguish between the $\FF$-adapted parts of $Z$ and the parts which are only $\bbH$-adapted. This will become important when we consider the set of equivalent martingale measures $\ccM_e(\FF)$ where the choice of $\FF$ plays an important role. In this regard, let $Z=(X,Y)$ with $Z$ generating the publicly available information filtration $\FF$; here $X$ is $d_1$-dimensional and $Y$ is $d_2$-dimensional, with $d_1+d_2=d'$.   Note that $X$ might contain insurance-related quantities which are not publicly available.

We model the stock market as an exponential driven by the affine process with an additional drift, i.e.~we assume that discounted stock prices are given by
\begin{align}
\label{def:affine S}
	S_t = \exp( a_0  t +  a \cdot Y_t), \qquad t \ge 0,
\end{align}
with $(a_0,a) = (a_0,a_1,\dots,a_{d_2})\in \RR^{1+d_2}$. This modeling contains the Black-Scholes model and exponential L\'evy models as a special case (in discrete time).

Moreover, we assume that the conditional distribution of $X_t$, conditional on $Y_t$ and $X_{t-1}$ has affine form and denote
\begin{align}\label{eq:36}
	E_P[\exp(u \cdot X_t) | Y_t,X_{t-1}] &= \exp\Big( \alpha(u) + \beta(u) \cdot X_{t-1} + \gamma(u) \cdot Y_t\Big),
\end{align}
 for all $0 < t \le T$ and $u \in \RR^{d'}$. The coefficients $\alpha, \beta,$ and $\gamma$ can be computed from $A$ and $B$ from Equation \eqref{eq:affineP}.

To ensure absence of financial arbitrage, we assume that there exists an equivalent martingale measure $Q$. To achieve a high degree of  tractability, we assume that $Y$ is again affine under $Q$ (with existing exponential moments),
\begin{align}\label{eq:40}
	E_Q[\exp(u Y_{t+1}) |Y_{t}] = \exp\big(A_Q(u)+B_Q( u) \cdot Y_t \big)
\end{align}
 for all $0 \le t < T$ and $u \in \RR^{d_2}$.

\begin{remark}
 The Esscher change of measure is one well-known example which keeps the affine property during the measure change, but not the only one, see for example \cite{KallseMuhleKarbe2010}. %
\end{remark}

\subsection{The case of one stopping time}\label{sec:affine one stopping time}
For the modelling of a single random time $\tau$, we follow the doubly stochastic approach introduced in Example \ref{ex:doubly stochastic}. Precisely, we consider a non-decreasing process $\Lambda\ge 0 $ given by
$$ \Lambda_t = b_0 + \sum_{ s=0 } ^t \Big( b \cdot X_s + c \cdot Y_s \Big), \qquad t \ge 0$$
with $(b_0,b,c) \in \RR^{1+d'}$. Here, $b_0$ is chosen such that $\Lambda_0 = 0$,  guaranteeing in particular $P(\tau=0) = 0$, and $(b,c)$ are chosen such that $\Lambda$ is non-decreasing.

Denote by $\FF^Z$ the filtration  generated by $Z$.
We assume that $\tau$ is a doubly-stochastic stopping time, i.e.\ it satisfies \eqref{def tau doubly stochastic} with 
cumulative intensity $\Lambda$. 
Moreover, let  $\bbH$ be the progressive enlargement of $\FF^Z$ with $\tau$, $\ccH_t = \ccF^Z_t \vee (\tau \wedge t)$, for all $t \in \TT$.

For valuing the death benefit, we are interested in the claim $S_\tau \ind{t<\tau  \le T}$, which we decompose to
$$
	\sum_{s=t+1}^T S_s (\ind{\tau > s-1} - \ind{\tau >s}). $$
The following proposition  allows to value this claim as well as the claim $S_T \ind{\tau >T}$. 

Before this, we nee to introduce some notation.
We introduce the following recursive notation: define $\phi(T) = \alpha(-b) + A_Q( a -c + \gamma(-b))$,  $\psi^1(T) = \beta(-b)$, and  $\psi^2(T)=B_Q(a -c + \gamma(-b))$;
now let $u(s)=\psi^1(s+1)-b$, $v(s)=\psi^2(s+1)-c$ and set
\begin{align}\label{eq:recursion rule}
	\phi(s) & =\alpha(u(s)) + A_Q(v(s)+\gamma(u(s))), \notag \\ \
	\psi^1(s) & = \beta(u(s)), \\
	\psi^2(s) &= B_Q( v(s)+\gamma(u(s)) ) \notag
\end{align}
for $s=0,\dots,T-1$. Moreover,
$\phi'(T) =  A_Q( a )$,  $\psi'^1(T) =0$, and  $\psi'^2(T)=B_Q(a )$, following the same recursion rule \eqref{eq:recursion rule}.
Denote
\begin{align}
	\Phi(t,T)  = \sum_{s=t+1}^T \phi(s),
\end{align}
and, analogously, $\Phi'(t,T)= \sum_{s=t+1}^T \phi'(s)$. Moreover we write 
$$\psi(t+1) \cdot Z_t:= \psi^1(t+1) \cdot X_t + \psi^2(t+1) \cdot Y_t,$$ again analogously for $\psi'$.

\begin{proposition}\label{prop: affine computation}
	We have the following valuation-results
	\begin{align*}
		E_{\subQcirP} [S_T \ind{\tau > T} | \ccH_t] &= \ind{\tau > t}   e^{a_0 T  +   \Phi(t,T)  +\psi(t+1) \cdot Z_t  }, \\
	    E_{\subQcirP} [S_T \ind{\tau  > T-1} | \ccH_t] &= \ind{\tau > t}   e^{a_0 T   + \Phi'(t,T)  +\psi'(t+1) \cdot Z_t }.
	 \end{align*}
\end{proposition}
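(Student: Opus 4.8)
The plan is to combine Example~\ref{ex:doubly stochastic} with the affine structure and a backward induction that collapses the nested conditional expectations. For the first identity, I would start from the doubly-stochastic formula \eqref{eq affine Xtau} applied to the payment stream $A_s = S_T\ind{s=T}$ together with the QP-rule \eqref{PQ-rule:filtation}, reducing $E_{\subQcirP}[S_T\ind{\tau>T}|\ccH_t]$ on the set $\{\tau>t\}$ to $\ind{\tau>t}\, G_t^{-1}E_Q\big[E_P[S_T\ind{\tau>T}|\ccF_T]\big|\ccF_t\big]$. Since $G_t=e^{-\Lambda_t}$ and $S_T\ind{\tau>T}$ is, conditionally on $\ccF_T$, determined through $G_T=e^{-\Lambda_T}$, the inner $P$-expectation produces $S_T e^{-\Lambda_T}$, and I would then evaluate the outer $Q$-expectation of $e^{a_0T + a\cdot Y_T - \Lambda_T}$ conditional on $\ccF_t$.

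The core of the argument is the backward recursion that evaluates $E_Q[\exp(a_0T + a\cdot Y_T - \Lambda_T)\,e^{\Lambda_t}|\ccF_t]$. First I would write $\Lambda_T - \Lambda_t = \sum_{s=t+1}^T (b\cdot X_s + c\cdot Y_s)$ so that the exponent splits into contributions indexed by $s$. The plan is to integrate out one period at a time, moving from $s=T$ down to $s=t+1$, using at each step first \eqref{eq:36} to take the conditional expectation over $X_s$ given $Y_s$ and $X_{s-1}$ (which converts an exponent linear in $X_s$ into the affine form $\alpha+\beta\cdot X_{s-1}+\gamma\cdot Y_s$ via the coefficients $\alpha,\beta,\gamma$), and then \eqref{eq:40} to take the conditional expectation over $Y_s$ given $Y_{s-1}$ (converting the exponent linear in $Y_s$ into $A_Q+B_Q\cdot Y_{s-1}$). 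The definitions in \eqref{eq:recursion rule} are precisely engineered so that after eliminating $X_s$ and $Y_s$ the accumulated constant increments by $\phi(s)$ and the remaining linear dependence on $Z_{s-1}=(X_{s-1},Y_{s-1})$ is exactly $\psi^1(s)\cdot X_{s-1}+\psi^2(s)\cdot Y_{s-1}$. I would verify the base case by checking that the terminal exponent $a_0T + a\cdot Y_T - (b\cdot X_T + c\cdot Y_T)$ — noting the $-b\cdot X_T$ and $(a-c)\cdot Y_T$ terms — matches $\phi(T),\psi^1(T),\psi^2(T)$ after one application of \eqref{eq:36} and \eqref{eq:40}.

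For the second identity I would proceed identically, except the claim is $S_T\ind{\tau>T-1}$, so the relevant survival factor at the terminal step is $G_{T-1}=e^{-\Lambda_{T-1}}$ rather than $e^{-\Lambda_T}$; equivalently the innermost contribution drops the $-b\cdot X_T - c\cdot Y_T$ term in the exponent. This is exactly why the primed initialisation $\phi'(T)=A_Q(a)$, $\psi'^1(T)=0$, $\psi'^2(T)=B_Q(a)$ omits the $\alpha(-b)$, $\beta(-b)$ and $-c$ contributions present in the unprimed case, after which the same recursion \eqref{eq:recursion rule} applies unchanged for $s\le T-1$.

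The main obstacle I anticipate is bookkeeping in the backward induction: one must be careful that at each step the variable being integrated ($X_s$ before $Y_s$) appears in the exponent with the correct linear coefficient, and that the composition $v(s)+\gamma(u(s))$ inside $A_Q$ and $B_Q$ correctly accounts for the fact that $\gamma(u(s))$ feeds the $X_s$-integration's residual dependence on $Y_s$ into the subsequent $Y_s$-integration. A clean formulation would be to prove by downward induction on $s$ the statement that $E_Q\big[\exp\big(a_0T + \sum_{r=s+1}^T(\text{increments}) - (\Lambda_T-\Lambda_s)\big)\big|\ccF_s\big] = \exp\big(\Phi(s,T) + \psi(s+1)\cdot Z_s\big)$ up to the fixed $a_0T$ factor, so that the two displayed formulas emerge at $s=t$; the only genuine content beyond this induction is the initial reduction via Example~\ref{ex:doubly stochastic} and the QP-rule, which I would state first.
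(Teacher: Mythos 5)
Your proposal is correct and takes essentially the same route as the paper: a reduction on $\{\tau>t\}$ to $\ind{\tau>t}\,e^{\Lambda_t}\,E_{\subQcirP}\big[S_T e^{-\Lambda_T}\mid\ccF^Z_t\big]$ via the enlargement result and the QP-rule (the paper invokes Proposition~\ref{prop EFXtau} here; note that \eqref{eq affine Xtau} with $A_s=S_T\ind{s=T}$ would produce $\ind{\tau=T}$ rather than $\ind{\tau>T}$, so your citation is slightly off even though the reduction formula you write down is the correct one), followed by exactly the paper's backward recursion that alternates \eqref{eq:36} under $P$ for $X_s$ and \eqref{eq:40} under $Q$ for $Y_s$, which is what \eqref{eq:recursion rule} encodes. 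Your explanation of the primed terminal condition $\phi'(T)=A_Q(a)$, $\psi'^1(T)=0$, $\psi'^2(T)=B_Q(a)$ for the second identity matches, and in fact spells out in more detail, what the paper dismisses with ``the second claim follows in a similar way.''
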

\begin{proof}
	Using Proposition \ref{prop EFXtau}, and the affine representations of $S$ and $\Lambda$, we obtain that
	\begin{align*}
		E_{\subQcirP} &[S_T \ind{\tau > T} | \ccH_t] = \ind{\tau > t} e^{\Lambda_t} E_{\subQcirP} [S_T e^{-\Lambda_T} | \ccF^Z_t] \\
		&= \ind{\tau > t}  E_{\subQcirP}\Big[ \exp\Big( a_0 T + a \cdot  Y_T - \sum_{s=t+1}^T \big( b \cdot X_s + c \cdot Y_s\big) \Big) \, \big| \,  Z_t \Big].		\end{align*}
	Now we proceed iteratively: first, consider the summation index $s=T$ in the above equation.
	Then, by Equation \eqref{eq:36} and by Equation \eqref{eq:40}, respectively, 
	\begin{align*}
		 E_P[ e^{ -b \,\cdot X_T } | \ccF^Z_{T-1} \vee Y_T ]
		     & = e^{\alpha(-b) + \beta(-b) \cdot X_{T-1} + \gamma(-b) \cdot Y_{T}}, \\
		 E_Q[ e^{ (a -c + \gamma(-b))  \cdot Y_T } | \ccF^Z_{T-1} ]
		     &=  e^{A_Q( a -c + \gamma(-b)) + B_Q(a -c + \gamma(-b)) \cdot Y_{T-1}}.
	\end{align*}
	Altogether, these two steps yield that 
	\begin{align}\label{affine QP rule}
		  E_{\subQcirP}[ e^{  u \cdot X_{s} + v \cdot Y_s  } | \ccF^Z_{s-1}]
		 &=  e^{\alpha(u) + \beta(u) \cdot X_{s-1}  + A_Q(v+\gamma(u)) + B_Q (v+\gamma(u)) \cdot Y_{s-1} }.
	\end{align}	
	With this formula, we can compute the next step (corresponding to the summation index $s=T-1$) where we choose 
	 $u=\beta(-b) -b$ and $v= B_Q(a-c+\gamma(-b)) - c$.
	Proceeding iteratively until $s=t+1$  yields
	 	\begin{align*}
		E_{\subQcirP} [S_T \ind{\tau > T} | \ccH_t] &= \ind{\tau > t}   e^{a_0 T   +\sum_{s=t+1}^T \phi(s) +\psi^1(t+1) \cdot  X_t + \psi^2(t+1) \cdot Y_t }, \end{align*}	
	the first claim.
	The second claim follows in a similar way. 	
	\end{proof}

\subsection{Survival and surrender} \label{sec:survival and surrender}
Since we are interested in modelling surrender and survival, we need to consider two stopping times. To exploit the full power of the affine framework, we will assume that they are conditionally independent, as in Example \ref{ex:conditional independence}. 
More general schemes could use copulas or 
could include self-exciting effects, like in \cite{ErraisGieseckeGoldberg2010} which, however, seems less important for the insurance application we have in mind. 

In this regard, let $\tau^m, \tau^s$ be conditionally independent, doubly stochastic random times as introduced in Section \ref{sec: two stopping times} with associated cumulated intensities $\Lambda^m$ and $\Lambda^s$. We assume that
\begin{align} 
	\label{def:Lambda i}
\Lambda_t^i = b_0^i + \sum_{s=0}^t\big(  b^i \cdot X_s + c^i \cdot Y_s\big), \qquad t \ge 0 
\end{align}
with $(b_0^i, b^i,c^i) \in \RR^{1+d'}$, $i\in \{m,s\}$. Again, the coefficients need to be chosen such that the processes start in $0$ and are increasing.

First, we obtain that under these assumptions %
by Equation \eqref{ex4.2} and Equations \eqref{G11} that
\begin{align} \label{eq:G^i affine}
	G_t^{1,1} & =  e^{-\Lambda^m_t - \Lambda^s_t} , \notag\\
	G_t^{2,u} &=   e^{-\Lambda^m_t} \Big( e^{ - \Lambda^s_{u-1}}  - e^{ - \Lambda^s_{u}} \Big), \\
	G_t^{3,u} &=   \Big( e^{ - \Lambda^m_{u-1}}  - e^{ - \Lambda^m_{u}} \Big) e^{-\Lambda^s_t} \notag
\end{align}
	with an analogous expression for $G^4$ (which  will not be used here).
	
From these expressions it is clear that we need to generalize our previous notions of $\phi$ and $\psi$.	
Fix $0 \le s < T$ and 
consider $\kappa=(\kappa^1,\kappa^2)
$, (we use $\kappa^1$ for the coefficients associated with $X$ and $\kappa^2$ for those associated with $Y$)  defined by  
\begin{align}
    \label{def:kappa}
	\kappa(t,s) =\kappa(a,t,s) := \begin{cases}
		(-b^m, a-c^m) & t=T, \\
		(-b^m  ,  - c^m ) & s <  t < T, \\
		(-b^m - b^s,  - c^m - c^s  ) & t   \le s,
	\end{cases} 
\end{align}
for $s \in \{0,\dots,T\}$. We highlight the dependence on $a$, whenever necessary, through the notation $\kappa(a,t,s)$.
Note that, with this definition $\kappa^1(T,T)=-b^m$ and $\kappa^2(T,T)=a-c^m$, for example.

Next, we define recursively, following \eqref{eq:recursion rule},
\begin{align}\label{eq:phi psi}
\begin{aligned}
	\phi(T,s) &= \alpha(\kappa^{1}(s,T)) + A_Q(\kappa^{2}(s,T)+\gamma(\kappa^{1}(s,T))), \\
	\psi^1(T,s) &=\beta(\kappa^{1}(s,T)), \\
	\psi^2(T,s) &= B_Q(\kappa^{2}(s,T)+\gamma(\kappa^{1}(s,T)))\\
	\phi(t,s) &= \alpha(u(t,s)) + A_Q(v(t,s)+\gamma(u(t,s))), \\
	\psi^1(t,s)  &= \beta(u(t,s)), \\
	\psi^2(t,s) &=B_Q( v(t,s)+\gamma(u(t,s)) ),
\end{aligned}\end{align}
with $u(t,s)=\psi^1(t+1,s)+ \kappa^{1}(t,s)$, $v(t,s)=\psi^2(t+1,s)+\kappa^{2}(t,s)$ and $t=0,\dots,T-1$.
Again, we write
\begin{align} \label{eq:Phi affine}
	\Phi(t,s,T) = \sum_{t'=t+1}^T \phi(t',s)
\end{align}
and $\psi^1 \cdot X + \psi^2 \cdot Y = \psi \cdot Z$. The coefficients $\phi',\ \psi'$ and $\Phi'$ are obtained by the same recursion, exchanging $b^m$ and $c^m$ for $b^s$ and $c^s$ in $\kappa$. 
With this notation,	we have the following valuation-results:
\begin{proposition}\label{prop: affine computation two stopping times}
 For $t \le s \le  T$, and on $\{\tau^m > t, \tau^s >t\}$,
	\begin{align*} 
    E_{\subQcirP} [S_T\ind{\tau^m > T, \tau^s >s} | \ccH_t]
	&=
		e^{\Lambda^m_t + \Lambda^s_t} E_{\subQcirP} [S_T e^{-\Lambda^m_T - \Lambda^s_s} | \ccF^Z_t] \\
	&=   e^{a_0 T   +\Phi(t,s,T) +\psi(t+1,s) \cdot  Z_t  } \\
     E_{\subQcirP} [S_T\ind{\tau^m > s, \tau^s >T} | \ccH_t] 
	    &=e^{\Lambda^m_t + \Lambda^s_t} E_{\subQcirP} [S_T e^{-\Lambda^m_s - \Lambda^s_T} | \ccF^Z_t] \\
	    &=   e^{a_0 T   + \Phi'(t,s,T) +\psi'(t+1,s) \cdot  Z_t  }.
	 \end{align*}
\end{proposition}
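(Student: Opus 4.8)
The plan is to follow the same iterative-conditioning strategy as in the proof of Proposition \ref{prop: affine computation}, but now carefully tracking the two intensities $\Lambda^m$ and $\Lambda^s$ simultaneously. The first display on each line is already justified: since $\tau^m,\tau^s$ are conditionally independent doubly stochastic times, Proposition \ref{prop EFXtau} (or equivalently the two-stopping-time machinery of Section \ref{sec: two stopping times}) reduces the $\ccH_t$-conditional expectation of $S_T \ind{\tau^m > T, \tau^s > s}$ to $(G_t^{1,1})^{-1}$ times an $\ccF^Z_t$-conditional expectation of $S_T$ weighted by the joint survival probability $\Gamma(T,s) = e^{-\Lambda^m_T - \Lambda^s_s}$, using \eqref{ex4.2}. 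This yields exactly $e^{\Lambda^m_t + \Lambda^s_t} E_{\subQcirP}[S_T e^{-\Lambda^m_T - \Lambda^s_s} \mid \ccF^Z_t]$, so the only real work is evaluating this $\QcirP$-expectation in closed affine form.

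The key step is to spell out the exponent inside the expectation. Using \eqref{def:affine S} and \eqref{def:Lambda i}, the argument is $a_0 T + a \cdot Y_T - \Lambda^m_T - \Lambda^s_s$, and since $\Lambda^m_T = b_0^m + \sum_{r=0}^T (b^m \cdot X_r + c^m \cdot Y_r)$ while $\Lambda^s_s = b_0^s + \sum_{r=0}^s (b^s \cdot X_r + c^s \cdot Y_r)$, each time-$r$ summand contributes a coefficient pair on $(X_r, Y_r)$ that depends on whether $r \le s$ (both intensities active), $s < r < T$ (only the mortality intensity active), or $r = T$ (mortality intensity plus the stock exponent $a$). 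This is precisely the case distinction encoded in $\kappa(t,s)$ in \eqref{def:kappa}. The plan is then to apply the one-step affine relation \eqref{affine QP rule} repeatedly, starting from the outermost index $r = T$ and conditioning down to $r = t+1$. At each step one reads off the updated coefficients $u(t,s) = \psi^1(t+1,s) + \kappa^1(t,s)$ and $v(t,s) = \psi^2(t+1,s) + \kappa^2(t,s)$ feeding them through $\alpha,\beta,\gamma$ (for the $P$-part over $X$) and $A_Q, B_Q$ (for the $Q$-part over $Y$), which is exactly the recursion \eqref{eq:phi psi}. Collecting the accumulated constants gives $\Phi(t,s,T)$ and the linear term $\psi(t+1,s) \cdot Z_t$, establishing the first identity; the second follows identically with the roles of the mortality and surrender parameters interchanged, which is why $\Phi'$ and $\psi'$ use the recursion with $b^s, c^s$ in place of $b^m, c^m$.

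The main obstacle I anticipate is bookkeeping rather than anything conceptually deep: one must verify that the three-regime coefficient assignment in $\kappa(t,s)$ correctly matches the summation ranges of $\Lambda^m_T$ and $\Lambda^s_s$ at every index, and in particular handle the boundary indices $r = s$ and $r = T$ consistently with the recursion's initialisation at $t = T$. A subtle point is that the conditioning in \eqref{affine QP rule} mixes a $P$-expectation over $X_r$ (given $\ccF^Z_{r-1} \vee Y_r$) with a $Q$-expectation over $Y_r$ (given $\ccF^Z_{r-1}$); one must check that the $\QcirP$-rule \eqref{PQ-rule:filtation} indeed licenses this factorisation at each step, so that the same one-step kernel applies uniformly along the backward recursion. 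Once the regime switch at $r = s$ is correctly accounted for, the iteration closes exactly as in Proposition \ref{prop: affine computation} and the claimed formulas follow.
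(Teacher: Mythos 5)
Your proposal is correct and follows essentially the same route as the paper: reduce the $\ccH_t$-conditional expectation to an $\ccF^Z_t$-conditional one via the two-stopping-time atom machinery and the conditional-independence formula \eqref{ex4.2}, write out the exponent from \eqref{def:affine S} and \eqref{def:Lambda i}, and then run the backward one-step recursion \eqref{affine QP rule} with the three-regime coefficients $\kappa(t,s)$, swapping $(b^m,c^m)$ and $(b^s,c^s)$ for the second identity. The ``subtle point'' you flag about mixing the $P$-step over $X$ with the $Q$-step over $Y$ is exactly what \eqref{affine QP rule} (established in the proof of Proposition \ref{prop: affine computation}) provides, so no gap remains.
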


\begin{proof}
	We proceed iteratively as in the proof of Proposition \ref{prop: affine computation}. %
  Note that, by Equations \eqref{def:affine S} and \eqref{def:Lambda i},
	\begin{align} 
	\label{temp:919}
	S_T \cdot e^{\Lambda^m_t + \Lambda^s_t} \cdot   e^{-\Lambda^m_T - \Lambda^s_s} = e^{a_0 T + a \cdot Y_T -  \sum_{i=t+1}^T  (b^m X_i+c^m Y_i)  -  \sum_{j=t+1}^s  (b^s X_j + c^s Y_j)}. \end{align}
	For the time point  $T$ with $s<T$,  we obtain with \eqref{affine QP rule}
	\begin{align}
		 E_{\subQcirP}[ e^{ -b^m \cdot X_T +(a-c^m) \cdot Y_T } | \ccF^Z_{T-1}] &=
		   e^{\alpha(-b^m) + \beta(-b^m) \cdot X_{T-1} + A_Q(a-c^m+\gamma(-b^m)) + B_Q(a-c^m + \gamma(-b^m)) \cdot Y_{T-1}} \notag\\
		   & = 	 e^{\phi(T,T) + \psi^1(T,T) \cdot X_{T-1} + \psi^2(T,T) \cdot Y_{T-1} }. \label{eq:944}
	\end{align}
	Note that $\phi(T,T)=\phi(T,s)$ and $\psi(T,T)=\psi(T,s)$ by definition.
	For the time point $T-1$, we have to compute  (since $s \le T-1$)
		\begin{align}
		\lefteqn{
	 E_{\subQcirP} \Big[ e^{ (\psi^1(T,s) - b^m) \cdot X_T  + (\psi^2(T,s) -c^m) \cdot Y_T} | \ccF^Z_{T-1} \Big] } \qquad \qquad \notag\\
	 & = E_{\subQcirP} \Big[ e^{ u(T-1,s) \cdot X_T  + v(T-1,s) \cdot Y_T} | \ccF^Z_{T-1}\Big] \notag\\[2mm]
	 &= e^{\phi(T-1,s) + \psi^1(T-1,s) \cdot X_{T-1} + \psi^2(T-1,s) \cdot Y_{T-1}},
     \label{eq:954}
	\end{align}
where we again used Equation \eqref{affine QP rule}.
For time points $t'\le T-1$ we have to consider the two cases  $t'>s$ and $t' \le s$. For the first case, we obtain as above that 
		\begin{align*}
	 E_Q[ e^{ (\psi^1(t',s) - b^m) \cdot X_{t'}  + (\psi^2(t',s) -c^m) \cdot Y_{t'}} | \ccF^Z_{t'-1}]= e^{\phi(t'-1,s) + \psi^1(t'-1,s) \cdot X_{t'-1} + \psi^2(t'-1,s) \cdot Y_{t'-1}}.
	\end{align*}
	For $t' \le s$ on the other side, we have to compute, see Equation \eqref{temp:919},
	\begin{align*}
		 \lefteqn{E_{\subQcirP}\Big[ e^{ (\psi^1(t',s) - b^m-b^s) \cdot X_{t'}  + (\psi^2(t',s) -c^m-c^s) \cdot Y_{t'}} | \ccF^Z_{t'-1}\Big] } \qquad \qquad \\ 
		 &= E_{\subQcirP}\Big[ e^{ u(t',s) \cdot X_{t'}  + v(t',s) \cdot Y_{t'}} | \ccF^Z_{t'-1}\Big] \\[2mm]
		 &=e^{\phi(t'-1,s) + \psi^1(t'-1,s) \cdot X_{t'-1} + \psi^2(t'-1,s) \cdot Y_{t'-1}}
	\end{align*}
	and the first claim follows for $s<T$. For the second claim, note that we have to consider
	\begin{align} 
	\label{temp:919b}
	S_T \cdot e^{\Lambda^m_t + \Lambda^s_t} \cdot   e^{-\Lambda^m_s - \Lambda^s_T} = e^{a_0 T + a \cdot Y_T -  \sum_{i=t+1}^s  (b^m X_i+c^m Y_i)  -  \sum_{j=t+1}^T  (b^s X_j + c^s Y_j)}. \end{align}
	This means the recursion starts with coefficients $b^s$ and $c^s$ instead of $b^m$ and $c^m$ which gives the coefficients $\phi'$, and $\psi'$ and the second claim follows.

    Finally, if $s=T$, at the final time point $T$, we have to modify \label{eq:944} and consider
    \begin{align*}
    E_{\subQcirP}[ e^{ (-b^m-b^s) \cdot X_T +(a-c^m-c^s) \cdot Y_T } | \ccF^Z_{T-1}]
    \end{align*}
    instead. This is reflected by the choice of $\kappa(T,T)$ in Equation \eqref{def:kappa} and the conclusion follows.
\end{proof}

In the next step we extend this proposition to allow for fractions $\nicefrac{S_T}{S_{T'}}.$ While for $T' \le t$ this immediate, the following results gives the extension for $T' > t.$ Also the case $T'=T$ can be obtained readily from Proposition \ref{prop: affine computation two stopping times} by letting $a_0=a=0$.

Define recursively, as above
\begin{align}\label{eq:phi psi'}
\begin{aligned}
	\psi^2(T,s,T') &= B_Q(\kappa^{2}(s,T)+\gamma(\kappa^{1}(s,T)))\\
	\psi^2(t,s,T') &=B_Q( v(t,s)+\gamma(u(t,s)) ) - \ind{t-1=T'}a.
\end{aligned}\end{align}
and let
$\psi^1(t,s) \cdot X + \psi^2(t,s,T') \cdot Y =: \psi(t,s,T') \cdot Z$. The coefficient $\psi'$ is obtained by the same recursion, exchanging $b^m$ and $c^m$ for $b^s$ and $c^s$ in $\kappa$.

\begin{proposition}\label{prop: affine computation two stopping times with T'}
 For $t \le s \le T$ and  $t < T' < T$,
	\begin{align*} E_{\subQcirP} \Big[\frac{S_T}{S_{T'}}e^{-\Lambda^m_T - \Lambda^s_s} | \ccF^Z_t\Big]
	&=   e^{a_0 (T-T')   +\Phi(t,s,T) +\psi(t+1,s,T') \cdot  Z_t  } \\
	    E_{\subQcirP} \Big[\frac{S_T}{S_{T'}}e^{-\Lambda^m_s - \Lambda^s_T} | \ccH_t\Big] 
	    &=   e^{a_0 (T-T')   + \Phi'(t,s,T) +\psi'(t+1,s,T') \cdot  Z_t  }.
	 \end{align*}
\end{proposition}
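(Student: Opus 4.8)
The plan is to run the same backward recursion as in the proof of Proposition \ref{prop: affine computation two stopping times}, the only new ingredient being the factor $1/S_{T'}$. First I would expand the integrand with \eqref{def:affine S} and \eqref{def:Lambda i}. Since $S_T/S_{T'} = \exp(a_0(T-T') + a\cdot Y_T - a\cdot Y_{T'})$, the analogue of \eqref{temp:919} reads
$$\frac{S_T}{S_{T'}}\, e^{\Lambda^m_t + \Lambda^s_t}\, e^{-\Lambda^m_T - \Lambda^s_s} = \exp\Big( a_0(T-T') + a\cdot Y_T - a\cdot Y_{T'} - \sum_{i=t+1}^T (b^m X_i + c^m Y_i) - \sum_{j=t+1}^s (b^s X_j + c^s Y_j)\Big).$$
Comparing with \eqref{temp:919}, the exponent differs only by the constant $-a_0 T'$ and by the single extra linear term $-a\cdot Y_{T'}$. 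Here the hypothesis $t < T' < T$ is what guarantees that level $T'$ is genuinely among the levels $t' = T, T-1, \dots, t+1$ processed by the recursion.

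Next I would carry out the backward iteration verbatim, applying the affine QP-rule \eqref{affine QP rule} at each level $t'$ to integrate out $(X_{t'}, Y_{t'})$. For every $t' \neq T'$ the step is identical to the one in Proposition \ref{prop: affine computation two stopping times} and reproduces the coefficients $\phi(t',s), \psi^1(t',s), \psi^2(t',s)$ of \eqref{eq:phi psi}. The only deviation occurs one step before $Y_{T'}$ is integrated out, i.e.\ when passing to the conditioning on $\ccF^Z_{T'-1}$: at that point the coefficient multiplying $Y_{T'}$ acquires the additional summand $-a$. This is exactly the term $-\ind{t-1=T'}\,a$ appearing in the modified recursion \eqref{eq:phi psi'}, which is triggered at $t = T'+1$. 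Consequently $\psi^1$, $\phi$ and hence $\Phi$ are left unchanged, while $\psi^2$ is shifted only at level $T'$, producing $\psi(t+1,s,T')$. Collecting the surviving coefficient of $Z_t$ together with the constant $a_0(T-T')$ then gives the stated formula, exactly as the analogous collection did in Proposition \ref{prop: affine computation two stopping times}.

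For the second formula I would repeat the computation for the integrand $S_T/S_{T'}\, e^{-\Lambda^m_s - \Lambda^s_T}$. By the symmetry already exploited in the previous proposition, this only exchanges $(b^m,c^m)$ with $(b^s,c^s)$ inside $\kappa$, which is precisely the passage from $\phi,\psi$ to $\phi',\psi'$; the extra $-a\cdot Y_{T'}$ term is threaded in exactly as before. Since the integrand is $\ccF^Z_T$-measurable and the doubly-stochastic construction is immersed, conditioning on $\ccH_t$ and on $\ccF^Z_t$ coincide, so both statements are consistent.

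I expect the main obstacle to be purely one of bookkeeping: one must verify that the single term $-a\cdot Y_{T'}$ enters the recursion at one level only and does not interfere with the $\kappa$-contributions carrying the $\Lambda^m,\Lambda^s$ coefficients. Once it is checked that this term merely shifts $\psi^2$ at level $T'$ while leaving $\psi^1$ and $\phi$ untouched, the claim is immediate from the recursion \eqref{eq:phi psi'}.
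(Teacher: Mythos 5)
Your proof is correct and takes essentially the same approach as the paper: expand the integrand via the affine representations \eqref{def:affine S} and \eqref{def:Lambda i}, run the backward recursion with the affine QP-rule \eqref{affine QP rule}, and observe that the extra factor $1/S_{T'}$ only shifts the $Y$-coefficient by $-a$ at the level where $Y_{T'}$ is integrated out, which is exactly the indicator term in \eqref{eq:phi psi'}; the second formula follows by exchanging $(b^m,c^m)$ and $(b^s,c^s)$ as in the paper. Your additional remark on why conditioning on $\ccH_t$ and $\ccF^Z_t$ agree (immersion in the doubly-stochastic construction) addresses a point the paper's statement leaves implicit, but does not change the argument.
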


\begin{proof}
The proof proceeds similarly as in the proof of Proposition \ref{prop: affine computation two stopping times}. To start with, we note that, by Equations \eqref{def:affine S} and \eqref{def:Lambda i},
	\begin{align} 
	\label{temp:919c}
	\frac{S_T}{S_{T'}} \cdot e^{\Lambda^m_t + \Lambda^s_t} \cdot   e^{-\Lambda^m_T - \Lambda^s_s} = e^{a_0 (T-T') + a \cdot (Y_T-Y_{T'}) -  \sum_{i=t+1}^T  (b^m X_i+c^m Y_i)  -  \sum_{j=t+1}^s  (b^s X_j + c^s Y_j)}. \end{align}
	For the time point  $T>T'$,  we recall from  \eqref{eq:944}
	\begin{align*}
		 E_{\subQcirP}[ e^{ -b^m \cdot X_T +(a-c^m) \cdot Y_T } | \ccF^Z_{T-1}] 
		   & = 	 e^{\phi(T,T) + \psi^1(T,T) \cdot X_{T-1} + \psi^2(T,T) \cdot Y_{T-1} }.
	\end{align*}
	Note that $\phi(T,T,T')=\phi(T,s,T')$ and $\psi(T,T,T')=\psi(T,s,T')$ by definition.
	For the time point $T-1$, we have to compute  (since $s \le T-1$)
		\begin{align*}
		\lefteqn{
	 E_{\subQcirP} \Big[ e^{ (\psi^1(T,s) - b^m) \cdot X_T  + (\psi^2(T,s) -c^m) \cdot Y_T- a Y_{T'}} | \ccF^Z_{T-1} \Big] } \qquad \qquad \\
	 &= e^{\phi(T-1,s) + \psi^1(T-1,s) \cdot X_{T-1} + \psi^2(T-1,s) \cdot Y_{T-1} - a Y_{T'} },
	\end{align*}
according to Equation \eqref{eq:954}.
It becomes apparent, that for $T-1=T'$, $\psi^2(T-1,s)$ from the recursion in Proposition \ref{prop: affine computation two stopping times} has to be replaced by $\psi^2(T-1,s)-a$ and the claim follows easily. 
\end{proof}

For the next result, we first extend \eqref{eq:40} to arbitrary times. For $t < T$, we define recursively
\begin{align}\label{eq: AQ and BQ}
\begin{aligned}
    A_Q(T-1,T) &= a_0 T + A_Q(a), \\
    A_Q(t,T) &= A_Q(t+1,T) +  A_Q(B_Q(t+1,T)), \\
    B_Q(T-1,T) &= B_Q(a) \\ 
    B_Q(t,T) &= B_Q(B_Q(t+1,T)).
    \end{aligned}
\end{align}
We will also use the notation $A_Q(t,T,a_0,a)$ and $B_Q(t,T,a)$ to highlight the dependence on $a_0$ and $a$, whenever necessary.

\begin{proposition}
    Under Equation \eqref{eq:40}, we have that
    \begin{align}\label{eq:affine property Y II}
        E_Q[S_T|\ccF_t^Z ]= e^{A_Q(t,T) + B_Q(t,T) \cdot Y_t}
    \end{align}
    with the coefficients $A_Q(t,T)$ and $B_Q(t,T)$ defined in Equation \eqref{eq: AQ and BQ}.
\end{proposition}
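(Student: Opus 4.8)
The plan is to prove \eqref{eq:affine property Y II} by backward induction on $t$, running from $t=T-1$ down to $t=0$, relying on the tower property of conditional expectation together with the affine transform formula \eqref{eq:40} applied with a suitably chosen argument.

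First I would establish the base case $t=T-1$. Since $S_T = e^{a_0 T + a\cdot Y_T}$ by \eqref{def:affine S}, one has
\begin{align*}
	E_Q[S_T \mid \ccF^Z_{T-1}] = e^{a_0 T}\, E_Q[e^{a\cdot Y_T}\mid Y_{T-1}] = e^{a_0 T + A_Q(a) + B_Q(a)\cdot Y_{T-1}},
\end{align*}
where the middle step uses \eqref{eq:40} with the choice $u=a$. By the initialisation $A_Q(T-1,T)=a_0 T + A_Q(a)$ and $B_Q(T-1,T)=B_Q(a)$ in \eqref{eq: AQ and BQ}, the right-hand side equals $e^{A_Q(T-1,T) + B_Q(T-1,T)\cdot Y_{T-1}}$, which is exactly the asserted form at $t=T-1$.

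For the inductive step, I would assume \eqref{eq:affine property Y II} holds at time $t+1$. Conditioning first on $\ccF^Z_{t+1}$ and invoking the tower property gives
\begin{align*}
	E_Q[S_T\mid \ccF^Z_t] = E_Q\big[e^{A_Q(t+1,T)+B_Q(t+1,T)\cdot Y_{t+1}}\mid \ccF^Z_t\big] = e^{A_Q(t+1,T)}\,E_Q\big[e^{B_Q(t+1,T)\cdot Y_{t+1}}\mid Y_t\big].
\end{align*}
Applying \eqref{eq:40} now with $u=B_Q(t+1,T)$ turns the remaining expectation into $e^{A_Q(B_Q(t+1,T)) + B_Q(B_Q(t+1,T))\cdot Y_t}$; collecting the exponents reproduces precisely the recursion $A_Q(t,T)=A_Q(t+1,T)+A_Q(B_Q(t+1,T))$ and $B_Q(t,T)=B_Q(B_Q(t+1,T))$ from \eqref{eq: AQ and BQ}, closing the induction.

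The only genuinely delicate point is the replacement of $E_Q[\,\cdot\mid\ccF^Z_t]$ by $E_Q[\,\cdot\mid Y_t]$ for functionals of $Y_{t+1}$ alone: the transform formula \eqref{eq:40} is stated only relative to $\sigma(Y_t)$, whereas here the conditioning is on the larger filtration $\ccF^Z_t$ generated by $Z=(X,Y)$. I would note that this is exactly the (immersion-type) Markov property of $Y$ under $Q$ with respect to $\FF^Z$ already invoked implicitly in deriving the affine $\QcirP$-rule \eqref{affine QP rule} and throughout Propositions~\ref{prop: affine computation}--\ref{prop: affine computation two stopping times with T'}. Once this reduction is granted, the remaining computation is a routine backward recursion and presents no further obstacle.
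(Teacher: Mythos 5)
Your proof is correct and follows essentially the same route as the paper's: a backward recursion whose base case applies \eqref{eq:40} with $u=a$ and whose inductive step applies \eqref{eq:40} with $u=B_Q(t+1,T)$, reproducing exactly the recursion \eqref{eq: AQ and BQ}. Your closing remark about conditioning on $\ccF^Z_t$ rather than $\sigma(Y_t)$ flags a point the paper passes over silently, so it is a minor refinement rather than a deviation.
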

\begin{proof}
    This result follows by an iterated application of \eqref{eq:40}. Indeed, we have that
    \begin{align*}
        E_Q[S_T|\ccF_{T-1}^Z ]
        &= e^{a_0 T + A_Q(a) + B_Q(a) \cdot Y_{T-1}} =
        e^{A_Q(T-1,T) + B_Q(T-1,T) \cdot Y_{T-1}}.
    \end{align*}
    Moreover, for any $s \ge t$ and $s \le T$,
    \begin{align*}
        E_Q[e^{A_Q(s+1,T) + B_Q(s+1,T) \cdot Y_{s+1}}|\ccF_{s}^Z ]
        &= e^{A_Q(s+1,T) +  A_Q(B_Q(s+1,T)) + B_Q(B_Q(s+1,T)) \cdot Y_{s}}
    \end{align*}
    and the claim follows.
\end{proof}

For the following result, we need to handle the dependence on $a$ specifically depending on time. 
We therefore define recursively, as above
\begin{align}\label{eq:phi psi' II}
\begin{aligned}
	\psi^2(T,s,T',a,\bar a) &= B_Q(\kappa^{2}(a,s,T)+\gamma(\kappa^{1}(a,s,T)))\\
	\psi^2(t,s,T',a,\bar a) &=B_Q( v(a,t,s)+\gamma(u(a,t,s)) ) - \ind{t-1=T'}\bar a.
\end{aligned}\end{align}
with $u(a,t,s)=\psi^1(t+1,s)+ \kappa^{1}(a,t,s)$, $v(a,t,s)=\psi^2(a,t+1,s)+\kappa^{2}(a,t,s)$
and let
$\psi^1(t,s) \cdot X + \psi^2(t,s,T',a,\bar a) \cdot Y =: \psi(t,s,T',a,\bar a) \cdot Z$. The coefficient $\psi'$ is obtained by the same recursion, exchanging $b^m$ and $c^m$ for $b^s$ and $c^s$ in $\kappa$.

\begin{proposition}\label{prop: affine computation two stopping times with T' II}
 For $t \le s < T$ and  $t < T' < s$,
	\begin{align*} E_{\subQcirP} \Big[\frac{S_T}{S_{T'}}e^{-\Lambda^m_s - \Lambda^s_s} | \ccF^Z_t\Big]
	&=   e^{A_Q(s,T)-a_0 T+ \Phi(t,s,s)+\psi(t+1,s,T',B_Q(s,T),a) \cdot Z_t}.
	 \end{align*}
     On the other side, if $s \le T' \le T$, then
     \begin{align*} E_{\subQcirP} \Big[\frac{S_T}{S_{T'}}e^{-\Lambda^m_s - \Lambda^s_s} | \ccF^Z_t\Big]
        &=e^{-a_0T' + A_Q(T',T) +A_Q(s,T',0,(B_Q(T',T)-a))} \\
        &\quad \cdot
        e^{\Phi\big(t,s,s,B_Q(s,T',0,(B_Q(T',T)-a))\big) + 
        \psi\big(t+1,s,B_Q(s,T',0,(B_Q(T',T)-a))\big)\cdot Z_t} \\
        & \quad \cdot e^{-\Lambda_t^m - \Lambda_t^s}.
    \end{align*}
\end{proposition}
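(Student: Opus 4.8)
The plan is to exploit the one structural difference from Proposition \ref{prop: affine computation two stopping times with T'}: here both cumulated intensities are summed only up to $s$, so that on the interval $(s,T]$ the integrand carries no insurance factor at all. Concretely, writing out $\frac{S_T}{S_{T'}}e^{-\Lambda^m_s-\Lambda^s_s}$ with \eqref{def:affine S} and \eqref{def:Lambda i} and factoring out the $\ccF^Z_t$-measurable quantity $e^{-\Lambda^m_t-\Lambda^s_t}$, everything to the right of $s$ is a function of $Y$ alone. I would therefore condition first on $\ccF^Z_s$ via the tower property: since the factor surviving on $(s,T]$ is financial and $\ccF^Z_T$-measurable, the QP-rule \eqref{PQ-rule:filtation} reduces the inner conditional expectation to one under $Q$, and the remaining conditional expectation given $\ccF^Z_t$ is handled by the same iterated QP-recursion \eqref{affine QP rule} used in Propositions \ref{prop: affine computation two stopping times} and \ref{prop: affine computation two stopping times with T'}. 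The two cases of the statement are exactly the two positions of $T'$ relative to this split point $s$.

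For the financial step I would evaluate $E_Q[\tfrac{S_T}{S_{T'}}\mid\ccF^Z_s]$ by iterating \eqref{eq:40}, as recorded in \eqref{eq:affine property Y II}. If $s\le T'\le T$, both $S_T$ and $1/S_{T'}$ sit on $(s,T]$: iterating $A_Q,B_Q$ from $T$ down to $T'$ gives $E_Q[S_T\mid\ccF^Z_{T'}]=e^{A_Q(T',T)+B_Q(T',T)\cdot Y_{T'}}$, multiplication by $1/S_{T'}=e^{-a_0T'-a\cdot Y_{T'}}$ turns the linear coefficient into $B_Q(T',T)-a$ and removes the drift, and a second iteration from $T'$ down to $s$ with zero drift and this coefficient produces $A_Q(s,T',0,(B_Q(T',T)-a))$ and $B_Q(s,T',0,(B_Q(T',T)-a))$; these are precisely the deterministic prefactor and the effective terminal $Y_s$-coefficient in the second formula. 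If instead $t<T'<s$, then $Y_{T'}$ is already $\ccF^Z_s$-measurable, so $e^{-a\cdot Y_{T'}}$ factors out and $E_Q[\tfrac{S_T}{S_{T'}}\mid\ccF^Z_s]=e^{-a_0T'-a\cdot Y_{T'}}e^{A_Q(s,T)+B_Q(s,T)\cdot Y_s}$; here $B_Q(s,T)$ is the effective terminal coefficient and the factor $e^{-a\cdot Y_{T'}}$ is deferred into the intensity region.

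For the intensity step I would run the recursion \eqref{affine QP rule} from $s$ down to $t+1$, feeding in the terminal $Y_s$-coefficient just produced as the effective stock coefficient. On all of $[t+1,s]$ both intensities are present, so each step injects $-b^m-b^s$ and $-c^m-c^s$, matching the $t\le s$ branch of $\kappa$ in \eqref{def:kappa}; this generates $\Phi(t,s,s)$ together with the linear term $\psi(t+1,s,\cdot)\cdot Z_t$. In the case $t<T'<s$ the deferred factor $e^{-a\cdot Y_{T'}}$ is absorbed at the single step with $t-1=T'$, which is exactly the correction $-\ind{t-1=T'}\bar a$ in the recursion \eqref{eq:phi psi' II}; this explains the extended argument $\psi(t+1,s,T',B_Q(s,T),a)$, with $B_Q(s,T)$ in the role of the stock coefficient and the final $a$ in the role of the insertion coefficient $\bar a$. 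In the case $s\le T'$ no insertion falls in $[t,s]$ and all $T'$-dependence is already contained in the financial step. Collecting the deterministic factors from both steps, together with the residual $\ccF^Z_t$-measurable factor $e^{-\Lambda^m_t-\Lambda^s_t}$, yields the two asserted expressions.

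The conceptual content is a routine combination of \eqref{eq:affine property Y II} with the one-step rule \eqref{affine QP rule}, so the main obstacle is bookkeeping rather than analysis. The delicate points are: threading the terminal $Y_s$-coefficient from the financial step into the intensity recursion as the effective coefficient $a$ entering $\kappa$; encoding the single insertion of $-a\cdot Y_{T'}$ through $\bar a$ and $\ind{t-1=T'}$ without disturbing the remaining steps; and recombining the various $a_0$- and $A_Q$-contributions, including the path-dependent factor $e^{-\Lambda^m_t-\Lambda^s_t}$, which is not a function of $Z_t$ alone and hence must be kept separate from the linear term $\psi\cdot Z_t$. Keeping the two regimes $t<T'<s$ and $s\le T'\le T$ cleanly apart and matching every intermediate coefficient to \eqref{eq: AQ and BQ} and \eqref{eq:phi psi' II} is where the care lies.
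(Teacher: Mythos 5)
Your proposal is correct and follows essentially the same route as the paper's proof: condition on $\ccF^Z_s$ and reduce the purely financial factor on $(s,T]$ via the iterated coefficients \eqref{eq: AQ and BQ} (splitting at $T'$ and absorbing $1/S_{T'}$ into the coefficient $B_Q(T',T)-a$ when $s \le T' \le T$, deferring $e^{-a\cdot Y_{T'}}$ into the intensity recursion via the $\bar a$-insertion of \eqref{eq:phi psi' II} when $t<T'<s$), then run the $\kappa$-recursion on $(t,s]$ with the threaded terminal coefficient --- which is exactly what the paper does by invoking Propositions \ref{prop: affine computation two stopping times} and \ref{prop: affine computation two stopping times with T'}, themselves proved by that same recursion. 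One remark: in the first case your financial step produces the deterministic prefactor $e^{A_Q(s,T)-a_0 T'}$, which matches the paper's own intermediate display, while the proposition as stated writes $A_Q(s,T)-a_0T$; this discrepancy is internal to the paper and not a flaw in your argument.
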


\begin{proof}
    We use iterated conditional expectations and the affine property to prove this result. Note that, in the case where $T' < s$, by Equation \eqref{eq:affine property Y II},
    \begin{align*}
        E_{\subQcirP} \Big[\frac{S_T}{S_{T'}}e^{-\Lambda^m_s - \Lambda^s_s} | \ccF^Z_t\Big] &=
        E_{\subQcirP} \Big[\frac{e^{A_Q(s,T)+B_Q(s,T)\cdot Y_s}}{S_{T'}}e^{-\Lambda^m_s - \Lambda^s_s} | \ccF^Z_t\Big]
         \\
         &=e^{A_Q(s,T)-a_0 T'} E_{\subQcirP} \Big[\frac{e^{B_Q(s,T)\cdot Y_s}}{e^{a \cdot Y_{T'}}}
         e^{-\Lambda^m_s - \Lambda^s_s} | \ccF^Z_t\Big].
    \end{align*}
    With a view on Equation \eqref{temp:919c}, we can now apply Proposition \ref{prop: affine computation two stopping times with T'}, however replacing $a \cdot Y_s$ by $B_Q(s,T) \cdot Y_s$. This is captured by adjusting the coefficients accordingly, using the notation from \eqref{eq:phi psi' II} and we obtain that
    \begin{align*}
        E_{\subQcirP} \Big[\frac{S_T}{S_{T'}}e^{-\Lambda^m_s - \Lambda^s_s} | \ccF^Z_t\Big] &=
        e^{A_Q(s,T)-a_0 T+ \Phi(t,s,s)+\psi(t+1,s,T',B_Q(s,T),a) \cdot Z_t}.
    \end{align*}
    On the other side, if $T' \ge s$,
    \begin{align*}
        E_{\subQcirP} \Big[\frac{S_T}{S_{T'}} | \ccF_s^Z\Big] 
        &=E_{\subQcirP} \Big[e^{-a_0T' + A_Q(T',T) + (B_Q(T',T)-a) \cdot Y_{T'} }  | \ccF_s^Z\Big]  \\
        &= e^{-a_0T' + A_Q(T',T)} \cdot {e^{A_Q(s,T',0,(B_Q(T',T)-a)) + B_Q(s,T',0,(B_Q(T',T)-a))\cdot Z_s}}.
    \end{align*}
    Using Proposition \ref{prop: affine computation two stopping times}, we obtain 
    \begin{align*}
        \lefteqn{E_{\subQcirP} \Big[\frac{S_T}{S_{T'}}e^{-\Lambda^m_s - \Lambda^s_s} | \ccF^Z_t\Big] } \qquad \\
        &=
        e^{-a_0T' + A_Q(T',T) +A_Q(s,T',0,(B_Q(T',T)-a))} \cdot
        E_{\subQcirP} \Big[e^{B_Q(s,T',0,(B_Q(T',T)-a))\cdot Z_s} e^{-\Lambda^m_s - \Lambda^s_s} | \ccF_t^Z \Big] \\
        &=e^{-a_0T' + A_Q(T',T) +A_Q(s,T',0,(B_Q(T',T)-a))} \\
        &\quad \cdot
        e^{\Phi\big(t,s,s,B_Q(s,T',0,(B_Q(T',T)-a))\big) + 
        \psi\big(t+1,s,B_Q(s,T',0,(B_Q(T',T)-a))\big)\cdot Z_t} \\
        & \quad \cdot e^{-\Lambda_t^m - \Lambda_t^s}
    \end{align*}
    and the proof is finished.
\end{proof}

\section{The valuation of the variable annuity}
\label{sec:valuation}

Coming back to the IFA-free valuation of a variable annuity, we now can utilize the obtained results on affine processes.

In the following, we utilise $\Phi$, $\psi$, $\Phi'$ and $\psi'$ as defined in Equations \eqref{eq:phi psi},   \eqref{eq:Phi affine} and \eqref{eq:phi psi'}. 
The special case of just computing the discounted value of future payments without reference to the stock price can be computed by letting $a_0=0$ and $a=0$. To emphasize this, we will write 
\begin{align} \label{eq:notation Phi}
\Phi^0(t,u,T) 
\end{align}
for the function $\Phi(t,u,T)$ obtained from  \eqref{eq:phi psi},   \eqref{eq:Phi affine} and \eqref{eq:phi psi'} with $a=0$. We use a similar notation for  the function $\psi$. It will also be useful when we can modify the parameter $a$. For this, we will use the notation
\begin{align} \label{eq:notation Phi2}
\Phi(t,u,T,a) 
\end{align}
and, similarly for $\psi$.
For clarity, we summarise the assumptions made until now in the following assumption.

\begin{assumption}\label{ass:61}
    We assume that 
    \begin{enumerate}[(i)]
        \item The pricing measure $P^*=\QcirP$ is the QP-measure
        \item $Z=(X,Y)$ is the affine process which satisfies Equations \eqref{eq:36} and \eqref{eq:40} with parameters $\alpha, \beta, \gamma$ and $A_Q,B_Q$, respectively.
        \item The discounted stock price $S$ satisfies Equation \eqref{def:affine S} with parameters $a_0,a$.
        \item $\tau^m$ and $\tau^S$ are double stochastic random times with associated cumulated intensities $\Lambda^m$ and $\Lambda^s$ satisfying Equation \eqref{def:Lambda i} with parameters $b_0^i, b^i, c^i$ for $i \in \{m,s\}$, respectively.
        \item The functions $\psi$, $\Phi'$ and $\psi'$ are defined in Equations \eqref{eq:phi psi} and \eqref{eq:Phi affine} where we also use the notation defined in Equation \eqref{eq:notation Phi} and Equation \eqref{eq:notation Phi2}.
    \end{enumerate}
\end{assumption}

Recall that the contract details of the variable annuity were introduced in Section \ref{sec:contract details}.
\begin{proposition}
Assume that Assumption \ref{ass:61} holds.  Then, the IFA-free price of the surrender benefit is given by
	\begin{align*}
		 \SB_0 &= 
         \sum_{k=1}^n \beta(T_k)p(T_k) \cdot \sum_{i=1}^k \pi_i \sum_{t=T_{k-1}+1}^{T_k}
         e^{a_0(T_k-T_i)} \Big( e^{\Phi(0,t-1,t) + \psi(1,t-1,T_i) \cdot Z_0} - e^{\Phi(0,t,t) + \psi(1,t,T_i) \cdot Z_0} \Big). 
	\end{align*}
\end{proposition}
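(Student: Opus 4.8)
The plan is to specialise the surrender-benefit formula \eqref{eq:SB2} to $t=0$ and reduce every summand to one of the affine expectations established in Section \ref{sec:affine}. First I would insert the explicit fund value $F_{T_k}^\pi=\sum_{i=1}^k \pi_i\,S_{T_k}/S_{T_i}$ from \eqref{eq:F pi} (the indicator $\ind{T_i\le T_k}$ collapses the inner sum to $i\le k$) and pull the deterministic factors $\beta(T_k)$, $p(T_k)$ and $\pi_i$ out of the conditional expectation, so that $\SB_0$ becomes a sum over $k$ and $i\le k$ of $E_{\subQcirP}[\ind{T_{k-1}<\tau^s\le T_k}\ind{\tau^s<\tau^m}\,S_{T_k}/S_{T_i}\mid\ccF_0]$.

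Next I would discretise the surrender event along the fine grid. Since $\tau^s$ is an $\bbH$-stopping time with values in $\TT$, on $\{\tau^s<\tau^m\}$ one has
\[
  \ind{T_{k-1}<\tau^s\le T_k}\,\ind{\tau^s<\tau^m}
  =\sum_{t=T_{k-1}+1}^{T_k}\ind{\tau^s=t,\ \tau^m>t},
\]
so the benefit splits into contributions from each admissible surrender date $t\in\{T_{k-1}+1,\dots,T_k\}$, while the payoff is still settled at $T_k$. As $S_{T_k}/S_{T_i}$ is $\ccF_T$-measurable, the $\QcirP$-rule \eqref{PQ-rule:filtation} turns each summand into
\[
  E_Q\Big[\tfrac{S_{T_k}}{S_{T_i}}\,P\big(\tau^s=t,\tau^m>t\mid\ccF_T\big)\,\big|\,\ccF_0\Big],
\]
and, under the conditional-independence assumption of Example \ref{ex:conditional independence}, the inner probability equals $P(\tau^s>t-1,\tau^m>t\mid\ccF_T)-P(\tau^s>t,\tau^m>t\mid\ccF_T)=e^{-\Lambda^m_t-\Lambda^s_{t-1}}-e^{-\Lambda^m_t-\Lambda^s_t}$ by \eqref{ex4.2}.

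It then remains to evaluate the two building blocks $E_Q[\frac{S_{T_k}}{S_{T_i}}e^{-\Lambda^m_t-\Lambda^s_{t-1}}\mid\ccF_0]$ and $E_Q[\frac{S_{T_k}}{S_{T_i}}e^{-\Lambda^m_t-\Lambda^s_t}\mid\ccF_0]$. These are of the shape treated in Proposition \ref{prop: affine computation two stopping times with T'}, with terminal (mortality) horizon the surrender date $t$, surrender boundary $t-1$ respectively $t$, and the $T'$-shift located at the payment date $T_i$; the iteration \eqref{eq:phi psi}--\eqref{eq:phi psi'} then produces the coefficients $\Phi(0,t-1,t),\psi(1,t-1,T_i)$ and $\Phi(0,t,t),\psi(1,t,T_i)$, together with the deterministic drift factor $e^{a_0(T_k-T_i)}$. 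Summing over $t$, over $i\le k$ and over $k$, and reinstating $\beta(T_k)p(T_k)$, gives the asserted expression.

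The step I expect to be the main obstacle is reconciling the three time indices in the last paragraph: surrender is monitored at the fine-grid date $t$, while the fund is read off --- and the payoff settled --- at the coarser date $T_k\ge t$, and the premium against which it is measured was paid at $T_i\le T_k$. Since the survival factor and the denominator $S_{T_i}$ (for $i<k$) are $\ccF^Z_t$-measurable, one first conditions at $t$ and collapses $S_{T_k}$ onto the survival horizon through the $Q$-affine property of $Y$ (as in the proof of Proposition \ref{prop: affine computation two stopping times with T' II}), before the joint survival--stock recursion of Proposition \ref{prop: affine computation two stopping times with T'} applies; the degenerate case $i=k$, where $S_{T_k}/S_{T_i}=1$, should be checked separately. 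Keeping $t$, $T_k$ and $T_i$ consistent through the $T'$-argument of $\psi$ is where the bookkeeping is delicate, whereas the underlying manipulations are the same iterated-conditioning arguments already used in Propositions \ref{prop: affine computation}--\ref{prop: affine computation two stopping times with T'}.
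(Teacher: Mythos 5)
Your proposal is correct and follows essentially the same route as the paper: specialise Proposition \ref{prop:VA} at $t=0$, insert the fund representation \eqref{eq:F pi}, decompose over the fine-grid surrender dates exactly as in Proposition \ref{prop 5.7}, identify the conditional probability $\big(e^{-\Lambda^s_{t-1}}-e^{-\Lambda^s_t}\big)e^{-\Lambda^m_t}$ via conditional independence, and evaluate the resulting affine expectations with Proposition \ref{prop: affine computation two stopping times with T'}. The time-index bookkeeping you single out as delicate (surrender monitored at $t$, payoff read at $T_k$, shift at $T_i$) is precisely the step the paper settles by a direct appeal to that proposition with terminal time $t$ and $T'=T_i$, so your extra care there --- collapsing $S_{T_k}$ first via the $Q$-affine property as in Proposition \ref{prop: affine computation two stopping times with T' II} and treating $i=k$ separately --- only makes the same argument more explicit.
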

\begin{proof}
	By Proposition \ref{prop:VA}, the surrender benefit is given by 
	$$ {\rm SB}_0=   \sum_{k=1}^n \beta(T_k)p(T_k) \, E_{\subQcirP}\Big[  \ind{ T_{k-1}<\tau^s \le T_k}  \ind{\tau^s < \tau^m} F^\pi_{T_k}   
	 \Big].$$
The value of the fund $F^\pi$ can be deduced from Equation \eqref{eq:F pi}, such that
we have to compute
\begin{align}
    E_{\subQcirP}\Big[  \ind{ T_{k-1}<\tau^s \le T_k}  \ind{\tau^s < \tau^m} \frac{S_{T_k}}{S_{T_i}}   
	 \Big]
\end{align}
for all $i \in \{ 1, \dots,k\}$. We write short $X^i_{T_k}:= \nicefrac{S_{T_k}}{S_{T_i}} $.

As in Proposition \ref{prop 5.7} we obtain that 
\begin{align*}
	E_{\subQcirP}\Big[  \ind{ T_{k-1}<\tau^s \le T_k}  \ind{\tau^s < \tau^m} X^i_{T_k}   
	 \Big] &= \sum_{t=T_{k-1}+1}^{T_k}
	 E_{\subQcirP}\Big[  \ind{ \tau^s =t }  \ind{t < \tau^m} X^i_{T_k}   
	 \Big] \\
	 &= \sum_{t=T_{k-1}+1}^{T_k}
	 E_{\subQcirP}\Big[  E_P\big[\ind{ \tau^s =t }  \ind{t < \tau^m} |\ccF^Z_{T_K}\big] X^i_{T_k}   
	 \Big] \\
	 &= \sum_{t=T_{k-1}+1}^{T_k}
	 E_{\subQcirP}\Big[  E_P\big[\ind{ \tau^s =t }  \ind{t < \tau^m} |\ccF^Z_t\big] X^i_{T_k}   
	 \Big]  \\
	 &= \sum_{t=T_{k-1}+1}^{T_k}
	 E_{\subQcirP}\Big[  G_t^{3,t} \, X^i_{T_k}   
	 \Big]. 
\end{align*}
By Equation \eqref{eq:G^i affine}, we obtain that
	\begin{align*}
		G^{3,t}_t = \Big( e^{ - \Lambda^s_{t-1}}  - e^{ - \Lambda^s_{t}} \Big) e^{-\Lambda^m_t}.
	\end{align*}
	Hence, from Proposition \ref{prop: affine computation two stopping times with T'}, 
	\begin{align*}
		E_{\subQcirP}[G^{3,t}_tX^i_{T_k}] &=  E_{Q}\Big[ \frac{S_{T_k}}{S_{T_i}} \Big( e^{ - \Lambda^s_{t-1}}  - e^{ - \Lambda^s_{t}} \Big) e^{-\Lambda^m_t} \Big] \\
		&= e^{a_0(T_k-T_i)} \Big( e^{\Phi(0,t-1,t) + \psi(1,t-1,T_i) \cdot Z_0} - e^{\Phi(0,t,t) + \psi(1,t,T_i) \cdot Z_0} \Big). \qedhere
	\end{align*}

\end{proof}

\bigskip 

For the following it is important to notice that $F^\pi$ is no longer affine, even if $S$ is. The computation of tractable formulas therefore becomes more complicated. If there is only a single investment, i.e.\ when $n=1$, then the following result provides the associated valuation formulas.  If the fund $F^\pi$ can be well approximated by an affine process, the  result also suffices of course. For the general case one needs to rely on more complicated Fourier decompositions, see for example Lemma 10.3 in \cite{Filipovic2009}.

Define 
\begin{align}
	\widetilde f(w,\lambda, \bar A) = \frac 1 {2 \pi} e^{(w + i \lambda) \bar A} \frac{ K^{-(w-1+i\lambda)}}{(w+ i \lambda)(w-1+i \lambda)}
\end{align}
for any $w>1$.

\begin{proposition}\label{prop:62}
Assume that Assumption \ref{ass:61} holds and in addition that $n=1$.  Then, the IFA-free valuation of the GMAB is as follows:
	\begin{align*}
		{\rm GMAB}_0   &= P_T^\pi \cdot  e^{\Phi^0(0,T,T)+ \psi^0(1,T) \cdot Z_0} \nonumber  \\
		&+ \frac 1 {2 \pi} \int e^{\Phi(0,T,T,(w+i\lambda)a) + \psi(1,T,T_1,(w+i\lambda)a) \cdot Z_0} \, \widetilde f(w,\lambda,a_0(T-T_1)) d \lambda.
	\end{align*}
\end{proposition}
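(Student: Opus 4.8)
The plan is to reduce the statement to two already-available ingredients: the affine survival-with-underlying identities of Propositions \ref{prop: affine computation two stopping times} and \ref{prop: affine computation two stopping times with T'}, and a damped Fourier (Carr--Madan) representation of the call payoff. I start from the GMAB expression \eqref{eq:GMAB2} evaluated at $t=0$. With $n=1$, Assumption \ref{ass:61} gives $F_T^\pi=\pi_1\,S_T/S_{T_1}$, while the guarantee $K_T^\pi$ is a deterministic constant, so I linearise the payoff by writing
\[
\max(F_T^\pi,K_T^\pi)=K_T^\pi+(F_T^\pi-K_T^\pi)^+ .
\]
This splits ${\rm GMAB}_0$ into a guarantee term and a call term, treated separately; the deterministic discount factor is absorbed into the constants throughout.

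For the guarantee term I factor out $K_T^\pi$ (and the discount), leaving $E_{\subQcirP}[\ind{\tau^m>T,\tau^s>T}\mid\ccF_0]$. This is the special case of Proposition \ref{prop: affine computation two stopping times} with $s=T$ and the stock switched off, i.e.\ $a_0=a=0$; at $t=0$ one has $\Lambda^m_0=\Lambda^s_0=0$ and $G_0^{1,1}=1$, so the identity evaluates to $e^{\Phi^0(0,T,T)+\psi^0(1,T)\cdot Z_0}$ in the notation of \eqref{eq:notation Phi}. Collecting the deterministic prefactors into $P_T^\pi$ yields the first summand.

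For the call term I use the damped Fourier representation: for a damping parameter $w>1$, Carr--Madan-type inversion applied at the fund log-return $y=a_0(T-T_1)+a\cdot(Y_T-Y_{T_1})$ yields, after absorbing the deterministic part $a_0(T-T_1)$ into the shift argument of the kernel,
\[
(F_T^\pi-K_T^\pi)^+=\int_\RR e^{(w+i\lambda)\,a\cdot(Y_T-Y_{T_1})}\,\widetilde f\big(w,\lambda,a_0(T-T_1)\big)\,d\lambda
\]
(up to the normalisation convention of the Fourier pair). Multiplying by $\ind{\tau^m>T,\tau^s>T}$, applying $E_{\subQcirP}$ and exchanging expectation with the $\lambda$-integral by Fubini, the inner expectation $E_{\subQcirP}[\ind{\tau^m>T,\tau^s>T}\,e^{(w+i\lambda)\,a\cdot(Y_T-Y_{T_1})}\mid\ccF_0]$ is exactly the left-hand side of Proposition \ref{prop: affine computation two stopping times with T'} taken at $t=0$, $s=T$, $T'=T_1$, with the stock drift set to $0$ and the stock loading $a$ replaced by $(w+i\lambda)a$; the survival indicator is converted into $e^{-\Lambda^m_T-\Lambda^s_T}$ as in the first line of Proposition \ref{prop: affine computation two stopping times}, using $G_0^{1,1}=1$. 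It therefore equals $e^{\Phi(0,T,T,(w+i\lambda)a)+\psi(1,T,T_1,(w+i\lambda)a)\cdot Z_0}$, and assembling the two contributions gives the asserted formula.

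The recursion bookkeeping hidden in $\Phi,\psi$ is routine, being packaged in the cited propositions; the genuine obstacle is analytic. First, one must verify that the complex exponential moment $E_{\subQcirP}[e^{(w+i\lambda)\,a\cdot(Y_T-Y_{T_1})}]$ is finite for the damping $w$ in an admissible interval $(1,\bar w)$, so that the affine coefficients $A_Q,B_Q,\alpha,\beta,\gamma$ stay well defined there and the affine identity extends from real to complex arguments by analytic continuation. Second, the Fubini interchange and convergence of the inversion integral must be justified, which rests on the $O(|\lambda|^{-2})$ decay of $\widetilde f$ together with a uniform-in-$\lambda$ bound on the modulus of the inner expectation. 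Fixing $w$ inside the domain of finite exponential moments secures both, and is precisely what legitimises the representation used above.
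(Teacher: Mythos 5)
Your proof is correct and follows essentially the same route as the paper: the split $\max(F_T^\pi,K_T^\pi)=K_T^\pi+(F_T^\pi-K_T^\pi)^+$, the affine survival formula (with $a_0=a=0$) for the guarantee term, and the damped Fourier representation of the call on $\nicefrac{S_T}{S_{T_1}}$ combined, via Fubini, with Proposition \ref{prop: affine computation two stopping times with T'} at $a_0=0$ and loading $(w+i\lambda)a$ for the inner expectation. Your closing remarks on finiteness of complex exponential moments and justification of the Fubini interchange make explicit what the paper leaves implicit in its phrase ``for some appropriate $w>1$''.
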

\begin{proof}
	 To begin with, we observe that for $n=1$ and $t \ge  T_1,$
    \begin{align}
        F^\pi_t = \pi_1\frac{S_t}{S_{T_1}}.
    \end{align}
Then, following Proposition \ref{prop:VA}, with $K=\nicefrac{K'}{\pi_1}$,
	\begin{align}
		E_{\subQcirP}\Big[\ind{\tau^m >T, \tau^s >T} \max (  F_T^\pi, K') \Big] 
		&= K' E_{\subQcirP}\Big[e^{- \Lambda^m_T - \Lambda^s_T}\Big] + 
		\pi_1 E_{\subQcirP}\Big[ \Big(  \frac{S_T}{S_{T_1}} - K\Big) ^+ e^{- \Lambda^m_T - \Lambda^s_T} \Big],
	\end{align}
	such that we have to value a call on $\nicefrac{S_T}{S_{T_1}}$ with strike $K$. This can be done in affine models by relying on Fourier inversion, and we follow Chapter 10 in \cite{Filipovic2009}. First, note that, using \eqref{eq:notation Phi},
	\begin{align}\label{temp:1301}
		E_{\subQcirP}\Big[e^{- \Lambda^m_T - \Lambda^s_T}\Big] &= e^{\Phi^0(0,T,T)+ \psi^0(1,T) \cdot Z_0}. 
	\end{align}
	In the following, we will condition on $\ccF^Z_{T_1}$ and use the affine property. 
	Next, observe that
	\begin{align*}
		\frac{S_T}{S_{T_1}} 
		& = e^{a_0 (T-T_1) + a \cdot Y_T -a \cdot Y_{T_1}}.	
		\end{align*}
	To obtain an affine representation we stack the vector $\bar Z = (Z_T,Z_{T_1})$ and denote the associated coefficients by
	\begin{align}
		\bar B := \big((0,a),(0,-a)\big)
	\end{align}
	together with $\bar A = a_0(T-T_1)$
	such that 
	\begin{align*}
	\frac{S_T}{S_{T_1}}  = e^{\bar A + \bar B \cdot \bar Z}. 	
	\end{align*}
	As in Corollary 10.4 in \cite{Filipovic2009} we have the integral representation
	$$
	\Big( e^{\bar A + \bar B \cdot  z} - K \Big)^+ = \frac 1 {2 \pi} \int e^{ (w+i\lambda) \bar B \cdot  z} \, \widetilde f(w,\lambda, \bar A) d \lambda
	$$
	for some appropriate $w >1$.
	An application of Fubini's theorem yields
	\begin{align*}
		\lefteqn{ E_{\subQcirP}\Big[ \Big(  \frac{S_T}{S_{T_1}} - K\Big) ^+ e^{- \Lambda^m_T - \Lambda^s_T} \Big] } \qquad \qquad \\
		&=  E_{\subQcirP} \Big[ \Big(e^{\bar A + \bar B \cdot \bar Z} - K\Big)^+ e^{- \Lambda^m_T - \Lambda^s_T} \Big]  \\
		&= \frac 1 {2 \pi} \int E_{\subQcirP} \Big[ e^{(w+i\lambda) \bar B \cdot  \bar Z} e^{- \Lambda^m_T - \Lambda^s_T} \Big] \, \widetilde f(w,\lambda, \bar A) d \lambda. 
	\end{align*}
	The inner expectation computes to
	\begin{align}
		E_{\subQcirP} \Big[ e^{(w+i\lambda) \bar B \cdot  \bar Z} e^{- \Lambda^m_T - \Lambda^s_T} \Big] &= E_{\subQcirP} \bigg[\frac{e^{(w+i\lambda)a \cdot Y_T}}{e^{(w+i\lambda)a \cdot Y_{T_1}}} e^{- \Lambda^m_T - \Lambda^s_T}\bigg]. 
		\label{temp 1198}
	\end{align}
	This can be computed using the first expression in Proposition \ref{prop: affine computation two stopping times with T'} by choosing $a_0=0$ and letting $a$ be equal to $(w+i\lambda)a$. Using the notation defined in Equation \eqref{eq:notation Phi2}, we obtain that
	\begin{align}\label{temp1209}
		\eqref{temp 1198} &= \exp\Big(\Phi(0,T,T,(w+i\lambda)a) + \psi(1,T,T_1,(w+i\lambda)a) \cdot Z_0\Big)
	\end{align} 
	and the  claim follows.
\end{proof}

The death benefit is more complicated and we introduce the following notation. Denote
\begin{align}
    {\rm DB}^1(t) &= P_{T_1}^\pi \Big( e^{-\Lambda^m_{t-1}} -e^{-\Lambda^m_t}\Big) e^{-\Lambda_t^s}.
\end{align}
Recall the notation for $A_Q$ and $B_Q$ in Equation \eqref{eq: AQ and BQ} and let for  $t \ge T_1$
\begin{align}
    {\rm DB}^2(t) &= 
e^{A_Q(t,T,0,(w+i\lambda)a)+ \Phi(0,t,t,0,(w+i\lambda)a)+\psi(1,t,T_1,B_Q(t,T,(w+i\lambda)a),(w+i\lambda)a) \cdot Z_0}
\end{align}
while for $t < T_1$, 
\begin{align}
    {\rm DB}^2(t) &= e^{ A_Q(T',T,0,(w+i\lambda)a) +A_Q(t,T_1,0,(B_Q(T_1,T,(w+i\lambda)a)-(w+i\lambda)a))} \notag\\
        &\quad \cdot
        e^{\Phi\big(0,t,t,B_Q(t,T_1,(B_Q(T_1,T,0,(w+i\lambda)a)-(w+i\lambda)a))\big) } \notag\\
        & \quad \cdot e^{ 
        \psi\big(1,t,B_Q(t,T_1,(w+i\lambda)a,(B_Q(T_1,T,(w+i\lambda)a)-(w+i\lambda)a))\big)\cdot Z_0}.
\end{align}
\begin{proposition}
Assume that Assumption \ref{ass:61} holds and in addition that $n=1$.  Then, the IFA-free valuation of the death benefit is as follows:
	\begin{align*}
		{\rm DB}_0   &= \beta(T_1) \cdot \sum_{t=1}^{T_k} {\rm DB}^1(t) 
		+  \beta(T_1) \cdot \pi_1 \sum_{t=1}^{T_1} {\rm DB}^2(t) .
	\end{align*}
\end{proposition}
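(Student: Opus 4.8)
The plan is to specialise the death benefit of Proposition~\ref{prop:VA} to $n=1$ and to recognise every summand as one of the affine building blocks assembled in Section~\ref{sec:affine}. Here the grid is $T_0=0<T_1<T_2=T$, so death is settled at $T_1$ if it occurs in $(0,T_1]$ and at $T$ if it occurs in $(T_1,T]$. First I would decompose the mortality indicator along the grid: for death before surrender, $\ind{\tau^m<\tau^s}\ind{T_{k-1}<\tau^m\le T_k}=\sum_{t=T_{k-1}+1}^{T_k}\ind{\tau^m=t,\,\tau^s>t}$, exactly as in Proposition~\ref{prop 5.7}. Since the settled payoff $\max(F^\pi_{T_k},K^\pi_{T_k})$ is $\ccF^Z_{T_k}$-measurable and $T_k\ge t$, the QP-rule \eqref{PQ-rule:filtation} together with conditional independence turns each summand into $E_{\subQcirP}\big[G_t^{3,t}\,\max(F^\pi_{T_k},K^\pi_{T_k})\big]$, where the death-at-$t$ weight $G_t^{3,t}=(e^{-\Lambda^m_{t-1}}-e^{-\Lambda^m_t})\,e^{-\Lambda^s_t}$ is read off from \eqref{eq:G^i affine}.

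The second step is the guarantee/call split $\max(F^\pi_{T_k},K^\pi_{T_k})=K^\pi_{T_k}+\pi_1\big(S_{T_k}/S_{T_1}-K\big)^+$ with $K=K^\pi_{T_k}/\pi_1$, using $F^\pi_{T_k}=\pi_1 S_{T_k}/S_{T_1}$ from \eqref{eq:F pi}. The deterministic floor $K^\pi_{T_k}$ carries no stock dependence, so taking $E_{\subQcirP}$ of the weight reduces to evaluating $E_{\subQcirP}[e^{-\Lambda^m_{t-1}-\Lambda^s_t}]-E_{\subQcirP}[e^{-\Lambda^m_t-\Lambda^s_t}]$ through Proposition~\ref{prop: affine computation two stopping times} with $a_0=a=0$, i.e.\ through the coefficients $\Phi^0,\psi^0$ of \eqref{eq:notation Phi}; these are the ${\rm DB}^1$-terms. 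For the call part I would reuse verbatim the Fourier argument from the proof of Proposition~\ref{prop:62}: writing the payoff as $\tfrac{1}{2\pi}\int e^{(w+i\lambda)\bar B\cdot\bar Z}\,\widetilde f(w,\lambda,\bar A)\,d\lambda$ for $w>1$, with $\bar Z=(Z_{T_k},Z_{T_1})$ and $\bar A,\bar B$ encoding $S_{T_k}/S_{T_1}$, and interchanging integral and expectation by Fubini. This leaves inner expectations of the form $E_{\subQcirP}\big[e^{(w+i\lambda)a\cdot(Y_{T_k}-Y_{T_1})}\,G_t^{3,t}\big]$, which are precisely the objects computed in Propositions~\ref{prop: affine computation two stopping times with T'} and~\ref{prop: affine computation two stopping times with T' II}, now evaluated at the shifted argument $(w+i\lambda)a$ via the notation \eqref{eq:notation Phi2}. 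Applying them to each of the two terms of $G_t^{3,t}$ produces the ${\rm DB}^2$-expressions.

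The real work sits in the position of the normalising date $T_1$ relative to the death date $t$, which decides which affine recursion applies. When $t\ge T_1$ the denominator $S_{T_1}$ of $S_T/S_{T_1}$ lies before the mortality/surrender horizon, and the single-step case of Proposition~\ref{prop: affine computation two stopping times with T' II} applies directly, giving the $t\ge T_1$ branch of ${\rm DB}^2$ with coefficients $A_Q(t,T,\cdot),B_Q(t,T,\cdot)$. When $t<T_1$ the date $T_1$ lies strictly inside the averaging window, so one must first integrate out $Y$ from the horizon up through $T_1$ to $T$ using the iterated coefficients $A_Q(\cdot,\cdot),B_Q(\cdot,\cdot)$ of \eqref{eq: AQ and BQ}, and only then run the two-stopping recursion; this is the second case of Proposition~\ref{prop: affine computation two stopping times with T' II} and explains the heavily nested arguments in the $t<T_1$ definition of ${\rm DB}^2$. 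The main obstacle is therefore bookkeeping: threading the complex Fourier argument $(w+i\lambda)a$ correctly through both the payment normalisation and the recursive $\phi,\psi$-updates, and verifying the integrability and the choice of $w>1$ that justify the Fubini interchange. Once the coefficients are matched, multiplying by the discount factors and summing over the death dates $t$ yields the claimed expression for ${\rm DB}_0$.
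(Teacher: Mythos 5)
Your proposal is correct and follows essentially the same route as the paper's own proof: decomposing the death indicator over grid dates to obtain the weights $G_t^{3,t}$ from Equation \eqref{eq:G^i affine}, splitting $\max(F^\pi_{T_k},K^\pi_{T_k})$ into the guarantee floor (valued via $\Phi^0,\psi^0$, giving the ${\rm DB}^1$-terms) plus a call on $S_{T_k}/S_{T_1}$, and valuing the call by the Fourier/Fubini argument of Proposition \ref{prop:62} with the inner expectations computed from Propositions \ref{prop: affine computation two stopping times with T'} and \ref{prop: affine computation two stopping times with T' II} at the shifted argument $(w+i\lambda)a$, including the case distinction $t\ge T_1$ versus $t<T_1$. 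Your handling of the two settlement dates is in fact slightly more careful than the paper's write-up, which glosses over the $k=2$ term while still using $S_T/S_{T_1}$ in the call.
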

	
\begin{proof}	
	According to Proposition \ref{prop:VA}, with $n=1$,
	\begin{align*}
        {\rm DB}_0 =  \beta(T_1)  E_{\subQcirP}\Big[ \ind{\tau^m \in (0,T_1]} \ind{\tau^s > \tau^m} \max(F_{T_1}^\pi,K_{T_1}^\pi) \Big].
	\end{align*}
    We note that
    \begin{align*}
    E_{\subQcirP}\Big[ \ind{\tau^m \le T_1} \ind{\tau^s > \tau^m} \max(F_{T_1}^\pi,K_{T_1}^\pi) \Big] &= 
    \sum_{t=1}^{T_1}E_{\subQcirP}\Big[ \ind{\tau^m =t} \ind{\tau ^s > t} \max(F_{T_1}^\pi,K_{T_1}^\pi) \Big] \\
    &= \sum_{t=1}^{T_1}E_{\subQcirP}\Big[ \ind{\tau^m =t} \ind{\tau ^s >ts} \max(F_{T_1}^\pi,K_{T_1}^\pi) \Big] \\
    &=  \sum_{t=1}^{T_1}E_{\subQcirP}\Big[ G_t^{3,t}  \max(F_{T_1}^\pi,T_{T_1}^\pi) \Big]
    \end{align*}
	We therefore focus on 
	\begin{align*}
	\lefteqn{E_{\subQcirP}\Big[ G_t^{3,t}  \max(F_{T_1}^\pi,K') \Big]} \qquad\\
    &= K' E_{\subQcirP}\Big[ G_t^{3,t}  \Big]  + 
    \pi_1 \, E_{\subQcirP}\Big[ \Big( \frac{S_T}{S_{T_1}} - K\Big)^+G_t^{3,t}  \Big]. 
	\end{align*}
    Recall from Equation \eqref{eq:G^i affine} that
    \begin{align*}
        G_t^{3,t} &= \Big( e^{-\Lambda^m_{t-1}} -e^{-\Lambda^m_t}\Big) e^{-\Lambda_t^s}
    \end{align*}
	Proceeding as for Equation \eqref{temp:1301}, we obtain that
    \begin{align*}
        E_{\subQcirP}\Big[ G_t^{3,t}  \Big] &= \Big( e^{\Phi^0(0,t-1,t) + \psi^0(1,t-1) \cdot Z_0} - e^{\Phi^0(0,t,t) + \psi^0(1,t) \cdot Z_0} \Big).
    \end{align*}
    Moreover, 
    \begin{align*}
            \lefteqn{E_{\subQcirP}\Big[ \Big( \frac{S_T}{S_{T_1}} - K\Big)^+G_t^{3,t}  \Big] } \qquad \\
            &= E_{\subQcirP}\Big[ \Big( \frac{S_T}{S_{T_1}} - K\Big)^+ e^{-\Lambda^m_{t-1}-\Lambda_t^s}   \Big]
            - E_{\subQcirP}\Big[ \Big( \frac{S_T}{S_{T_1}} - K\Big)^+ e^{-\Lambda^m_{t}-\Lambda_t^s}   \Big].
    \end{align*}
    Now we obtain, as in the proof of Proposition \ref{prop:62},
    \begin{align*}
		\lefteqn{ E_{\subQcirP}\Big[ \Big(  \frac{S_T}{S_{T_1}} - K\Big) ^+ e^{- \Lambda^m_t - \Lambda^s_t} \Big] } \qquad \qquad \\
		&=  E_{\subQcirP} \Big[ \Big(e^{\bar A + \bar B \cdot \bar Z} - K\Big)^+ e^{- \Lambda^m_t - \Lambda^s_t} \Big]  \\
		&= \frac 1 {2 \pi} \int E_{\subQcirP} \Big[ e^{(w+i\lambda) \bar B \cdot  \bar Z} e^{- \Lambda^m_t - \Lambda^s_t} \Big] \, \widetilde f(w,\lambda, \bar A) d \lambda. 
	\end{align*}
	The inner expectation computes to
	\begin{align}
		E_{\subQcirP} \Big[ e^{(w+i\lambda) \bar B \cdot  \bar Z} e^{- \Lambda^m_t - \Lambda^s_t} \Big] 
        &= E_{\subQcirP} \bigg[\frac{e^{(w+i\lambda)a \cdot Y_T}}{e^{(w+i\lambda)a \cdot Y_{T_1}}} e^{- \Lambda^m_t - \Lambda^s_t}\bigg]. 
		\label{temp 1295}
	\end{align}
    For this, we can proceed exactly as in Proposition \ref{prop: affine computation two stopping times with T' II}, however replacing $a$ by $(w+i\lambda)a$. 
    This gives us, when $t \ge T_1$,
    \begin{align*}
        \eqref{temp 1295} &= 
        e^{A_Q(t,T,0,(w+i\lambda)a)+ \Phi(0,t,t,0,(w+i\lambda)a)+\psi(1,t,T_1,B_Q(t,T,(w+i\lambda)a),(w+i\lambda)a) \cdot Z_0}.
    \end{align*}
    On the other side, for $t < T_1$, we obtain that
     \begin{align*}
        &\eqref{temp 1295} = e^{ A_Q(T_1,T,0,(w+i\lambda)a) +A_Q(t,T_1,0,(B_Q(T_1,T,(w+i\lambda)a)-(w+i\lambda)a))} \\
        & \cdot
        e^{\Phi\big(0,t,t,B_Q(t,T_1,(B_Q(T_1,T,0,(w+i\lambda)a)-(w+i\lambda)a))\big) + 
        \psi\big(1,t,B_Q(t,T_1,(w+i\lambda)a,(B_Q(T_1,T,(w+i\lambda)a)-(w+i\lambda)a))\big)\cdot Z_0} %
     \end{align*}
	and the proof is finished.
\end{proof}

 \section{Conclusion}
 \label{sec:conclusion}
  
  Summarising, this paper shows that in a highly flexible affine framework, a very general class of insurance policies, including variable annuities with guarantees, can be valued in a tractable and insurance-finance arbitrage-free way. The formulas are more involved, if the payoffs become more complicated, as is the case for example for the death benefit - where in particular interactions between survival and surrender times need to be handled.

\end{document}